\newcommand\textcyr[1]{{\fontencoding{OT2}\fontfamily{wncyr}\selectfont #1}}
\DeclareMathOperator{\pr}{\mathsf{P}}
\DeclareMathOperator{\M}{\mathsf{E}}
\DeclareMathOperator{\D}{Var}
\DeclareMathOperator{\diag}{diag}
\DeclareMathOperator{\argmax}{argmax}
\newcommand{\rrVert}{\Vert}
\newcommand{\llVert}{\Vert}
\newcommand{\bGamma}{{\boldsymbol\varGamma}}
\newcommand{\bSigma}{{\boldsymbol\varSigma}}
\newcommand{\ba}{{\mathbf a}}
\newcommand{\bA}{{\mathbf A}}
\newcommand{\bb}{{\mathbf b}}
\newcommand{\bp}{{\mathbf p}}
\newcommand{\bx}{{\mathbf x}}
\newcommand{\EE}{{\mathbb E}}
\newcommand{\bI}{{\mathbf I}}
\newcommand{\bX}{{\mathbf X}}
\newcommand{\bL}{{\mathbf L}}
\newcommand{\bD}{{\mathbf D}}
\newcommand{\bV}{{\mathbf V}}
\newcommand{\bu}{{\mathbf u}}
\newcommand{\bM}{{\mathbf M}}
\newcommand{\bY}{{\mathbf Y}}
\newcommand{\R}{{\mathbb R}}
\newcommand{\bydef}{\stackrel{\text{\rm def}}{=}}
\newcommand{\inprob}{\stackrel{\text{\rm P}}{\longrightarrow}}
\newcommand{\weak}{\stackrel{\text{\rm W}}{\longrightarrow}}
\newtheorem{thm}{Theorem}
\theoremstyle{definition}
\newtheorem{experiment}{Experiment}
\newtheorem*{note}{Note}
\begin{document}

\begin{frontmatter}
\pretitle{Research Article}

\title{Confidence ellipsoids for regression coefficients by~observations from a mixture}

\author{\inits{V.}\fnms{Vitalii}~\snm{Miroshnichenko}\ead[label=e1]{vitaliy.miroshnychenko@gmail.com}}
\author{\inits{R.}\fnms{Rostyslav}~\snm{Maiboroda}\thanksref{cor1}\ead[label=e2]{mre@univ.kiev.ua}}
\thankstext[type=corresp,id=cor1]{Corresponding author.}

\address{\institution{Taras Shevchenko National University of Kyiv},
Kyiv, \cny{Ukraine}}



\markboth{V. Miroshnichenko and R. Maiboroda}{Confidence ellipsoids for
regression coefficients by
observations from a mixture}

\begin{abstract}
Confidence ellipsoids for linear regression coefficients are
constructed by observations from a mixture with varying concentrations.
Two approaches are discussed. The first one is the nonparametric
approach based on the weighted least squares technique. The second one
is an approximate maximum likelihood estimation with application of the
EM-algorithm for the estimates calculation.
\end{abstract}
\begin{keywords}
\kwd{Finite mixture model}
\kwd{linear regression}
\kwd{mixture with varying concentrations}
\kwd{nonparametric estimation}
\kwd{maximum likelihood}
\kwd{confidence ellipsoid}
\kwd{EM-algorithm}
\end{keywords}
\begin{keywords}[MSC2010]%
\kwd{62J05}
\kwd{62G20}
\end{keywords}

\received{\sday{29} \smonth{1} \syear{2018}}
\revised{\sday{16} \smonth{5} \syear{2018}}
\accepted{\sday{19} \smonth{5} \syear{2018}}
\publishedonline{\sday{4} \smonth{6} \syear{2018}}
\end{frontmatter}
\setcounter{footnote}{0}

\section{Introduction}\label{SectIntr}

This paper is devoted to the technique of construction of confidence
ellipsoids for coefficients of linear regression in the case, when
statistical data are derived from a mixture with finite number of
components and the mixing probabilities (the concentrations of
components) are different for different observations.
These mixing probabilities are assumed to be known, but the
distributions of the components are unknown. (Such mixture models are
also known as mixtures with varying concentrations, see \cite
{Autin:TestDensities} and \cite{Maiboroda2003}).

The problem of estimation of regression coefficients by such mixed
observations was considered in the parametric setting in \cite{Faria}
and \cite{GrunLeisch}. The authors of these papers assume that the
distributions of regression errors and regressors are known up to some
unknown parameters.
The models considered in these papers are called
\textit{finite mixtures of regression models}. Some versions of
maximum likelihood estimates are used for the estimation of unknown
parameters of distributions and regression coefficients under these
models. The EM-algorithm is used for computation of the estimates.
(This algorithm is also implemented in \texttt{R} package \texttt
{mixtools}, \cite{Benaglia}. See \cite{McLachlan} for the general
theory of EM-algorithm and its application to mixture models).

In \cite{Liubashenko} a nonparametric approach was proposed under which
no parametric models on the distributions of components are assumed. A
weighted least squares technique is used to derive estimates for
regression coefficients. Consistency and asymptotic normality of the
estimates are demonstrated.

Note that in \cite
{Liubashenko,Maiboroda:StatisticsDNA,Maiboroda:AdaptMVC} a
nonparametric approach to the analysis of
mixtures with varying concentrations was developed in the case when the
concentrations of components (mixing probabilities) are known. Some
examples of real life data analysis under this assumption were
considered in \cite{Maiboroda:StatisticsDNA,Maiboroda:AdaptMVC}.

Namely, in \cite{Maiboroda:AdaptMVC} an application to the analysis of
the Ukrainian parliamentary elections-2006 was considered. Here the
observed subjects were respondents of the Four Wave World Values survey
(conduced in Ukraine in 2006) and the mixture components were the
populations of different electoral behavior adherents. The mixing
probabilities were obtained from the official results of voting in 27
regions of Ukraine.

In \cite{Maiboroda:StatisticsDNA} an application to DNA microarray data
analysis was presented. Here the subjects were nearly 3000 of genes of
the human genome. They were divided into two components by the
difference of their expression in two types of malignantly transformed
tissues. The concentrations were defined as a posteriori probabilities
for the genes to belong to a given component, calculated by
observations on the genes expression in sample tissues.

In this paper we will show how to construct confidence sets
(ellipsoids) for regression coefficients under both parametric and
nonparametric approaches. Quality of obtained ellipsoids is compared
via simulations.

The rest of the paper in organized as follows. In Section \ref
{SecModel} we present a formal description of the model. Nonparametric
and parametric estimates of regression coefficients and their
asymptotic properties are discussed in Sections \ref{SectNonp} and \ref
{SectParam}. Estimation of asymptotic covariances of these estimates is
considered in Section \ref{SecEstN}. The confidence ellipsoids are
constructed in Section \ref{SecCE}. Results of simulations are
presented in Section \ref{SecSimul}.
In Section \ref{SecApplication} we present a toy example of application
to a real life sociological data.
Section \ref{SecConcl} contains concluding remarks.

\section{The model}\label{SecModel}

We consider the model of mixture with varying concentrations. It means
that each observed subject $O$ belongs to one of $M$ different
subpopulations (mixture components). The number of component which the
subject belongs to is denoted by $\kappa(O)\in\{1,2,\dots,M\}$. This
characteristic of the subject is not observed. The vector of observed
variables of $O$ will be denoted by $\xi(O)$. It is considered as a
random vector with the distribution depending on the subpopulation
which $O$ belongs to. A~structural linear regression model will be used
to describe these distributions. (See \cite{Seber} for general theory
of linear regression).

That is, we consider one variable $Y=Y(O)$ in $\xi(O)=(Y(O),X^1(O),\dots
,\allowbreak X^d(O))^T$ as a response and all other ones $\bX
(O)=(X^1(O),\dots,X^d(O))^T$ as regressors in the model
\begin{equation}
\label{EqModel} Y(O)=\sum_{i=1}^d
b_i^{\kappa(O)}X^i(O)+\varepsilon(O),
\end{equation}
where $b_i^k$, $i=1,\dots,d$, $k=1,\dots,M$ are unknown regression
coefficients for the $k$-th component of the mixture, $\varepsilon(O)$
is the error term. Denote by $\bb^k=(b_1^k,\dots,b_d^k)^T$ the vector
of the $k$-th component's coefficients. We consider $\varepsilon(O)$ as
a random variable and assume that
\[
\M\bigl[\varepsilon(O)\ |\ \kappa(O)=m\bigr]=0, \quad m=1,\dots,M,
\]
and
\[
\sigma^2_m=\D\bigl[\varepsilon(O)\ |\ \kappa(O)=m\bigr]<
\infty.
\]
($\sigma^2_m$ are unknown).

It is also assumed that the regression error term
$\varepsilon(O)$ and regressors $\bX(O)$ are conditionally independent
for fixed $\kappa(O)=m$, $m=1,\dots,M$.

The observed sample $\varXi_n$ consists of values $\xi_j=(Y_j,\bX
_j^T)^T=\xi(O_j)$, $j=1,\dots,n$, where $O_1$,\dots, $O_n$ are
independent subjects which can belong to different components with probabilities
\[
p_j^m=\pr\bigl\{\kappa(O_j)=m \bigr\},\quad
m=1,\dots,M; \ j=1,\dots,n.
\]
(all mixing probabilities $p_j^m$ are known).

To describe completely the probabilistic behavior of the observed data
we need to introduce the distributions of $\varepsilon(O)$ and $\bX(O)$
for different components. Let us denote
\[
F_{\bX,m}(A)=\pr\bigl\{\bX(O)\in A\ |\ \kappa(O)=m \bigr\} \text{ for any
measurable } A\subseteq\R^d,
\]
and
\[
F_{\varepsilon,m}(A)=\pr\bigl\{\varepsilon(O)\in A |\ \kappa(O)=m \bigr\} \text{
for any measurable } A\subseteq\R.
\]
The corresponding probability densities $f_{\bX,m}$ and $f_{\varepsilon
,m}(x)$ are defined by
\[
F_{\bX,m}(A)=\int_A f_{\bX,m}(\bx)d\bx, \
F_{\varepsilon,m}(A)=\int_A f_{\varepsilon,m}(x)dx
\]
(for all measurable $A$).

The distribution of observed $\xi_j$ is a mixture of distributions of
components with the mixing probabilities $p_j^m$, e.g.
\[
\pr\{\bX_j\in A\}=\sum_{m=1}^M
p_j^m F_{\bX_j,m}(A)
\]
and the probability density $f_j(y,\bx)$ of $\xi_j=(Y_j,\bX_j^T)^T$ at
a point $(y,\bx^T)^T\in\R^{d+1}$ is\vadjust{\goodbreak}
\[
f_j(y,\bx)=\sum_{m=1}^M
p_j^m f_{\bX,m}(\bx)f_{\varepsilon,m}\bigl(y-
\bx^T\bb^m\bigr).
\]
In what follows we will discuss two approaches to the estimation of the
parameters of interest $\bb^k$, for a fixed $k\in\{1,\dots,M\}$.

The first one is the nonparametric approach. Under this approach we do
not need to know the densities $f_{\bX,m}$ and $f_{\varepsilon,m}$.
Moreover we even do not assume the existence of these densities. The
estimates are based on some modification of the least squares tehnique
proposed in \cite{Liubashenko}.

In the second, parametric approach we assume that the densities of
components are known up to some unknown
nuisance parameters $\vartheta_m\in\varTheta\subseteq\R^L$:
\begin{equation}
\label{EqParamM} f_{\bX,m}(\bx)=f(\bx;\vartheta_m),\qquad
f_{\varepsilon,m}(x)=f_{\varepsilon
}(x;\vartheta_m).
\end{equation}
In the most popular parametric \textit{normal mixture model} these
densities are normal, i.e.
\begin{equation}
\label{EqNormalMix} f_{\varepsilon,m}\sim N\bigl(0,\sigma^2_m
\bigr),\qquad f_{\bX,m}(\bx)\sim N(\mu _m,\varSigma_m),
\end{equation}
where
$\mu_m\in\R^d$ is the mean of $\bX$ for the $m$-th component and $\varSigma
_m\in\R^{d\times d}$ is its covariance matrix. All the parameters are
usually unknown.
So, in this case the unknown nusance parameters are
\[
\vartheta_m=\bigl(\mu_m,\varSigma_m,
\sigma_m^2\bigr),\quad m=1,\dots,M.
\]
\section{Generalized least squares estimator}\label{SectNonp}

Let us consider the nonparametric approach to the estimation of the
regression coefficients developed in
\cite{Liubashenko}. It is based on the minimization of weighted least squares
\[
J_{k;n}(\bb)\bydef{1\over n}\sum
_{j=1}^n a_{j;n}^k
\Biggl(Y_{j}-\sum_{i=1}^d
b_iX_{j}^i \Biggr)^2,
\]
over all possible $\bb=(b_1,\dots,b_d)^T\in\R^d$.

Here $\ba^k=(a_{1;n}^k,\dots,a_{n;n}^k)$ are the minimax weights for
estimation of the $k$-th component's distribution. They are defined by
\begin{equation}
\label{EqDefa} a_{j;n}^k={1\over
\det\bGamma_n}\sum
_{m=1}^M(-1)^{k+m}\gamma_{mk;n}p_{j}^m,
\end{equation}
where $\gamma_{mk;n}$ is the $(mk)$-th minor of the matrix
\[
\bGamma_n= \Biggl({1\over n}\sum
_{j=1}^n p_j^lp_j^i
\Biggr)_{l,i=1}^M,
\]
see \cite{Liubashenko,Maiboroda:StatisticsDNA} for details.

Define
$\bX\bydef(X_j^i)_{j=1,\dots,n;\ i=1,\dots,d}$ to be the $n\times
d$-matrix of observed regressors, $\bY\bydef(Y_1,\dots,Y_N)^T$ be
the vector of observed responses,
$\bA\bydef\diag(a_{1;n}^k,\dots,a_{n;n}^k)$\vadjust{\goodbreak} be the diagonal
weights matrix for estimation of $k$-th component. Then
the stationarity condition
\[
{\partial J_{k;n}(\bb)\over\partial\bb}=0
\]
has the unique solution in $\bb$,
\begin{equation}
\label{EqEsimDef} \hat\bb^{\mathit{LS}}(k,n)\bydef \bigl(\bX^T\bA\bX
\bigr)^{-1}\bX^T\bA\bY,
\end{equation}
if the matrix $\bX^T\bA\bX$ is nonsingular.

Note that the weight vector $\ba^k$ defined by (\ref{EqDefa}) contains
negative weights, so\break $\hat\bb_{\mathit{LS}}(k,n)$ is not allways the point of
minimum of $J_{k;n}(\bb)$. But in what follows we will consider $\hat\bb
_{\mathit{LS}}(k,n)$ as a generalized least squares estimate for $\bb^k$.

The asymptotic behavior of $\hat\bb_{\mathit{LS}}(k,n)$ as $n\to\infty$ was
investigated in \cite{Liubashenko}. To describe it we will need some
additional notation.

Let us denote by
\[
\bD^{(m)}\bydef\M \bigl[\bX(O)\bX^T(O)\ | \ \kappa(O)=m
\bigr]
\]
the matrix of second moments of regressors for the $m$-th component.

The consistency conditions for the estimator $\hat\bb^{\mathit{LS}}(k,n)$ are
given by the following theorem.

\begin{thm}[Theorem 1 in \cite{Liubashenko}]\label{ThConsist}
Assume that
\begin{enumerate}
\item[1.] $\bD^{(m)}$ and $\sigma_m^2$ are finite for
all $m=1,\dots,M$.
\item[2.] $\bD^{(k)}$ is nonsingular.
\item[3.] There exists $C>0$ such that $\det\bGamma_n>C$ for all $n$
large enough.
\end{enumerate}

Then $\hat\bb^{\mathit{LS}}(k,n)\inprob\bb^{(k)}$ as $n\to\infty$.
\end{thm}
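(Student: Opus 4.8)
The plan is to write $\hat\bb^{\mathit{LS}}(k,n) = (\bX^T\bA\bX)^{-1}\bX^T\bA\bY$ and substitute the model equation $\bY = \bX\bb^{(k)} + \bR$, where the $j$-th entry of $\bR$ is $Y_j - \bX_j^T\bb^{(k)} = \sum_{i=1}^d(b_i^{\kappa(O_j)} - b_i^{(k)})X_j^i + \varepsilon(O_j)$; note this is the ``residual against the $k$-th coefficient vector,'' which is generally nonzero because $O_j$ may belong to a component other than $k$. This gives the decomposition
\[
\hat\bb^{\mathit{LS}}(k,n) - \bb^{(k)} = \bigl(\tfrac1n\bX^T\bA\bX\bigr)^{-1}\cdot\tfrac1n\bX^T\bA\bR .
\]
So it suffices to show that $\tfrac1n\bX^T\bA\bX \inprob \bD^{(k)}$ (a nonsingular limit, by assumption~2) and that $\tfrac1n\bX^T\bA\bR \inprob 0$; then the continuous mapping theorem (matrix inversion is continuous at nonsingular matrices) and Slutsky's lemma finish the argument.

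The key tool for both limits is a law of large numbers for weighted sums of independent-but-not-identically-distributed terms under the minimax weights $a_{j;n}^k$. The crucial algebraic fact about these weights, which I would invoke from \cite{Liubashenko,Maiboroda:StatisticsDNA}, is the ``unbiasedness'' identity $\tfrac1n\sum_{j=1}^n a_{j;n}^k p_j^m = \delta_{km}$ (Kronecker delta) for each $m$; this follows directly from the definition \eqref{EqDefa} via cofactor expansion of $\bGamma_n$. Consequently, for any integrable component-wise function $g$, the weighted mean $\tfrac1n\sum_j a_{j;n}^k\,\M[g(\xi(O_j))] = \tfrac1n\sum_j a_{j;n}^k\sum_m p_j^m\,\M[g\mid\kappa=m]$ telescopes to $\M[g\mid\kappa=k]$. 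Applying this with $g(\xi_j) = X_j^iX_j^{i'}$ gives the expectation of $\tfrac1n\bX^T\bA\bX$ converging to $\bD^{(k)}$ entrywise. For the variance, assumption~3 ($\det\bGamma_n > C$) together with $0 \le p_j^m \le 1$ keeps the weights $a_{j;n}^k$ uniformly bounded, so $\tfrac1{n^2}\sum_j (a_{j;n}^k)^2\D[\,\cdot\,]$ is $O(1/n)$ provided the relevant fourth moments are finite — and here I expect the statement to need fourth moments of the regressors, which is a mild gap between the hypotheses as literally stated (only second moments $\bD^{(m)}$) and what a clean $L^2$ argument wants; the resolution in \cite{Liubashenko} presumably uses an $L^1$-type weak LLN (truncation) that needs only the stated first/second moments.

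For the cross term $\tfrac1n\bX^T\bA\bR$: its $i$-th coordinate is $\tfrac1n\sum_j a_{j;n}^k X_j^i\bigl(\sum_{i'}(b_{i'}^{\kappa(O_j)}-b_{i'}^{(k)})X_j^{i'} + \varepsilon(O_j)\bigr)$. Taking conditional expectations and using the conditional independence of $\varepsilon(O)$ and $\bX(O)$ given $\kappa(O)=m$ (together with $\M[\varepsilon\mid\kappa=m]=0$), the $\varepsilon$-part has conditional mean zero, so by the same weighted LLN it tends to $0$; for the $\bX$-part, the weighted mean of $\M[X_j^i\sum_{i'}(b_{i'}^m-b_{i'}^{(k)})X_j^{i'}\mid\kappa=m]$ telescopes to $\sum_{i'}(b_{i'}^{(k)}-b_{i'}^{(k)})\bD^{(k)}_{i i'} = 0$ — the would-be bias cancels exactly because the $m=k$ term is the only survivor and its coefficient difference vanishes. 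Variance control is identical to before. I would therefore structure the proof as: (i) state and justify the weight identity $\tfrac1n\sum_j a_{j;n}^k p_j^m \to \delta_{km}$ and the uniform bound on $|a_{j;n}^k|$ from assumption~3; (ii) prove a weak LLN lemma for $\tfrac1n\sum_j a_{j;n}^k Z_j$ with $Z_j$ independent and $\M[Z_j]$ having the telescoping form; (iii) apply it to the three matrix/vector sums above; (iv) combine via continuous mapping and Slutsky. The main obstacle is step (ii) — getting the weak LLN under only the stated moment hypotheses rather than the stronger moments a naive second-moment (Chebyshev) argument would require; this is handled by truncation, exactly as in the proof of Theorem~1 of \cite{Liubashenko}, which I would cite for the delicate part.
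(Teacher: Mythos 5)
The paper you are reading does not prove this statement at all: it is quoted verbatim as Theorem~1 of \cite{Liubashenko} and used as an imported result, so there is no in-paper proof to compare against. Your plan is nevertheless the correct and, in fact, the canonical route for that reference: the decomposition $\hat\bb^{\mathit{LS}}(k,n)-\bb^{(k)}=(\tfrac1n\bX^T\bA\bX)^{-1}\tfrac1n\bX^T\bA\bR$, the exact cofactor identity $\tfrac1n\sum_j a_{j;n}^k p_j^m=\delta_{km}$ (it is an equality, not just a limit, since $\bGamma_n$ is built from the same averages), the uniform bound on $|a_{j;n}^k|$ from $\det\bGamma_n>C$, the telescoping of weighted means to the $k$-th component, and continuous mapping plus Slutsky are precisely the ingredients behind the cited proof, and they are consistent with how the present paper itself invokes theorem~4.2 of \cite{Maiboroda:StatisticsDNA} to get consistency of $\hat D_n^{\mathit{is}(k)}$ under only second moments. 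Your one acknowledged soft spot --- that a naive Chebyshev bound for $\tfrac1n\sum_j a_{j;n}^k X_j^iX_j^s$ would need fourth moments, so a truncation-based weak LLN for weighted triangular arrays is required under the stated hypotheses --- is exactly where the technical work of \cite{Liubashenko} lies; delegating that lemma to the reference is legitimate, but be aware it is the substantive part rather than a routine detail. Also note the estimator is only defined when $\bX^T\bA\bX$ is nonsingular; your argument gives this with probability tending to one, which is all that is needed for convergence in probability.
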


Let us introduce the following notation.
\begin{align}
D^{\mathit{is}(m)}&\bydef\M\bigl[ X^i(O)X^s(O)|\kappa(O)=m\bigr],\nonumber\\
\bL^{\mathit{is}(m)}&\bydef\bigl(\M\bigl[ X^i(O)X^s(O)X^q(O)X^l(O)|
\kappa(O)=m\bigr]\bigr)_{l,q=1}^d,\nonumber\\
%
\bM^{\mathit{is}(m,p)}&\bydef\bigl(D^{\mathit{il}(m)}D^{sq(p)}
\bigr)_{l,q=1}^d,\nonumber\\
%
\label{EqAlpha} \alpha^{(k)}_{s,q}&=\lim_{n\to\infty}
{1\over n}\sum_{j=1}^n
\bigl(a_{j;n}^k\bigr)^2p_j^sp_j^q
\end{align}
(if this limit exists),
\[
\alpha^{(k)}_{s}=\lim_{n\to\infty}
{1\over n}\sum_{j=1}^n
\bigl(a_{j;n}^k\bigr)^2p_j^s=
\sum_{q=1}^M\alpha^{(k)}_{s,q}.
\]
The following theorem provides conditions for the asymptotic normality
and describes the dispersion matrix of the estimator $\hat\bb^{\mathit{LS}}(k,n)$.
\begin{thm}[Theorem 2 in \cite{Liubashenko}]\label{ThCLT}
Assume that
\begin{enumerate}
\item[1.] $\M[(X^i(O))^4\ |\ \kappa(O)=m]<\infty$ and
$\M[(\varepsilon(O))^4\ |\ \kappa(O)=m]<\infty$
for all $m=1,\dots,M$, $i=1,\dots,d$.
\item[2.] Matrix $\bD=\bD^{(k)}$ is nonsingular.
\item[3.] There exists $C>0$ such that $\det\bGamma_n>C$ for all $n$
large enough.
\item[4.] For all $s$,$q=1,\dots,M$ there exist $\alpha_{s,q}^{(k)}$ defined
by (\ref{EqAlpha}).
\end{enumerate}

Then $\sqrt{n}(\hat\bb^{\mathit{LS}}(k,n)-\bb^{(k)})\weak N(0,\bV)$,
where
\begin{align}
\label{EqDefV} \bV&\bydef\bD^{-1}\bSigma\bD^{-1},\\
\bSigma&=\bigl(\varSigma^{\mathit{il}}\bigr)_{\mathit{il}=1}^d,\nonumber\\
\label{varequal2} \varSigma^{\mathit{il}} &= \sum_{s=1}^M
\alpha^{k}_{s} \bigl( D^{\mathit{il}(s)}\sigma_s^2
+ \bigl(\bb^{s} - \bb^{k}\bigr)^T
\bL^{\mathit{il}(s)} \bigl(\bb^{s} - \bb^{k}\bigr) \bigr)
\nonumber
\\
&\quad- \sum_{s=1}^M \sum
_{m=1}^M \alpha^{k}_{s,m} \bigl(
\bb^{s} - \bb^{k}\bigr)^T \bM^{\mathit{il}(s,m)}
\bigl(\bb^{m} - \bb^{k}\bigr).
\end{align}
\end{thm}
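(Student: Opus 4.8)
The plan is to write $\hat\bb^{\mathit{LS}}(k,n)-\bb^{(k)}$ in a form where the classical Lindeberg–Feller CLT for triangular arrays of independent (not identically distributed) summands applies, and then to identify the limiting covariance. First I would substitute the regression model \eqref{EqModel} into \eqref{EqEsimDef}. Writing $Y_j=\bX_j^T\bb^{\kappa(O_j)}+\varepsilon_j$ and using $\bX^T\bA\bX\,\bb^{(k)}=\sum_j a_{j;n}^k\bX_j\bX_j^T\bb^{(k)}$, one gets
\[
\hat\bb^{\mathit{LS}}(k,n)-\bb^{(k)}
=\Bigl(\tfrac1n\bX^T\bA\bX\Bigr)^{-1}\cdot\tfrac1n\sum_{j=1}^n a_{j;n}^k\bX_j\bigl(\varepsilon_j+\bX_j^T(\bb^{\kappa(O_j)}-\bb^{(k)})\bigr).
\]
By Theorem~\ref{ThConsist} (whose hypotheses follow from those listed here, since fourth moments imply second moments, and condition~3 is identical) the matrix $\tfrac1n\bX^T\bA\bX$ converges in probability to $\bD=\bD^{(k)}$, which is nonsingular; so by Slutsky's theorem it suffices to prove $\tfrac1{\sqrt n}\sum_j \bu_{j;n}\weak N(0,\bSigma)$, where $\bu_{j;n}\bydef a_{j;n}^k\bX_j(\varepsilon_j+\bX_j^T(\bb^{\kappa(O_j)}-\bb^{(k)}))$, and then $\bV=\bD^{-1}\bSigma\bD^{-1}$ as in \eqref{EqDefV}.

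Next I would check that the summands are centred and compute their covariance. Conditioning on $\kappa(O_j)=m$ and using $\M[\varepsilon_j\mid\kappa(O_j)=m]=0$ together with the conditional independence of $\varepsilon(O)$ and $\bX(O)$, one sees $\M[\bu_{j;n}\mid\kappa(O_j)=m]=\M[\bX_j\bX_j^T\mid\kappa=m](\bb^m-\bb^{(k)})$, which need not vanish termwise; this is why the array is only asymptotically centred, and one must verify that $\tfrac1{\sqrt n}\sum_j\M[\bu_{j;n}]\to 0$. Taking the unconditional expectation gives $\M[\bu_{j;n}]=\sum_{m} p_j^m a_{j;n}^k \bD^{(m)}(\bb^m-\bb^{(k)})$, and summing, $\tfrac1n\sum_j a_{j;n}^k p_j^m=\delta_{km}$ by the defining property \eqref{EqDefa} of the minimax weights (this is precisely the unbiasedness identity $\tfrac1n\sum_j a_{j;n}^k p_j^m\to\delta_{km}$ from \cite{Liubashenko}); hence $\tfrac1{\sqrt n}\sum_j\M[\bu_{j;n}]=\sqrt n\bigl(\tfrac1n\sum_j a_{j;n}^kp_j^k-1\bigr)\bD^{(k)}\cdot 0+\cdots$, which I would argue is $o(1)$ — more carefully, $\sum_j a_{j;n}^kp_j^m$ equals exactly $n\delta_{km}$ for each $n$ (this is an exact algebraic identity, not merely a limit, as follows from Cramer's rule applied to $\bGamma_n$), so the mean is identically zero and no remainder appears. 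For the covariance, $\cov(\bu_{j;n})=\M[\bu_{j;n}\bu_{j;n}^T]-\M[\bu_{j;n}]\M[\bu_{j;n}]^T$; expanding $\M[\bu_{j;n}\bu_{j;n}^T]$ by conditioning on $\kappa(O_j)$ and using the fourth-moment matrices $\bL^{\mathit{is}(m)}$ for the $\bX_j\bX_j^T(\bb^m-\bb^{(k)})(\bb^m-\bb^{(k)})^T\bX_j\bX_j^T$ term and $\bD^{(m)}\sigma_m^2$ for the $\varepsilon_j^2\bX_j\bX_j^T$ term (the cross term vanishes by conditional independence and zero mean of $\varepsilon$), then multiplying by $(a_{j;n}^k)^2p_j^m$ and averaging over $j$, the limits \eqref{EqAlpha} produce exactly the two sums in \eqref{varequal2}; the correction $\M[\bu_{j;n}]\M[\bu_{j;n}]^T$ contributes, after averaging, the $\bM^{\mathit{il}(s,m)}$ double sum (note $\bM^{\mathit{is}(m,p)}=\bD^{(m)}(\bb^m-\bb^{(k)})(\bb^p-\bb^{(k)})^T\bD^{(p)}$ up to index bookkeeping), which is why it enters $\varSigma^{\mathit{il}}$ with a minus sign. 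So $\tfrac1n\sum_j\cov(\bu_{j;n})\to\bSigma$ with $\bSigma$ as claimed.

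Finally I would verify the Lindeberg condition for the array $\{n^{-1/2}\bu_{j;n}\}$: for every $\tau>0$,
\[
\tfrac1n\sum_{j=1}^n\M\bigl[\|\bu_{j;n}\|^2\mathbf 1\{\|\bu_{j;n}\|>\tau\sqrt n\}\bigr]\to 0.
\]
Since the weights $a_{j;n}^k$ are uniformly bounded (condition~3 keeps $\det\bGamma_n$ away from zero while the minors $\gamma_{mk;n}$ and the probabilities $p_j^m$ are bounded), $\|\bu_{j;n}\|\le C(|\varepsilon_j|+\|\bX_j\|^2)\cdot\max_m\|\bb^m-\bb^{(k)}\|\cdot\|\bX_j\|$ up to constants, and the fourth-moment hypotheses in condition~1 give a uniform bound $\M\|\bu_{j;n}\|^{2+\delta}\le K$ for a suitable small $\delta>0$ (indeed fourth conditional moments of $X^i$ and of $\varepsilon$ yield a finite $(2{+}\delta)$-moment of $\bu_{j;n}$ for $\delta$ small, since $\bu_{j;n}$ is of order "(regressor)$^3$" or "(regressor)$\cdot$(error)"); then the Lyapunov form of the condition, $n^{-(1+\delta/2)}\sum_j\M\|\bu_{j;n}\|^{2+\delta}\le Kn^{-\delta/2}\to0$, closes the argument. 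Applying the multivariate Lindeberg–Feller theorem (equivalently, the Cramér–Wold device plus the scalar version to the projections $\mathbf c^T\bu_{j;n}$) gives $\tfrac1{\sqrt n}\sum_j\bu_{j;n}\weak N(0,\bSigma)$, and combining with the Slutsky step yields the theorem. The main obstacle, I expect, is not any single estimate but the careful bookkeeping in the covariance computation — keeping the three tensors $\bL$, $\bM$, $\bD$ and the two families of $\alpha$-limits correctly matched so that the mean-correction term produces precisely the $\bM$-sum with the right sign — together with confirming that $\tfrac1n\sum_j a_{j;n}^kp_j^m=\delta_{km}$ holds exactly, so that the array is genuinely centred and the CLT applies with no drift.
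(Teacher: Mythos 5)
This paper does not actually prove Theorem \ref{ThCLT}: it is imported verbatim as Theorem 2 of \cite{Liubashenko}, so your proposal can only be compared with the argument expected in that source, and your route is indeed the standard one: substitute the model \eqref{EqModel} into \eqref{EqEsimDef}, use consistency of $\frac1n\bX^T\bA\bX$ for $\bD$ plus Slutsky, and prove a CLT for $\frac1{\sqrt n}\sum_j \bu_{j;n}$ with $\bu_{j;n}=a_{j;n}^k\bX_j(\varepsilon_j+\bX_j^T(\bb^{\kappa(O_j)}-\bb^k))$. Your algebra is sound: the identity $\frac1n\sum_j a_{j;n}^kp_j^m=\delta_{km}$ is indeed exact for every $n$ with $\det\bGamma_n\neq0$ (it is the adjugate identity $\mathrm{adj}(\bGamma_n)\bGamma_n=\det\bGamma_n\,\EE$ applied to \eqref{EqDefa}), so the row sum is exactly centred, and your covariance bookkeeping correctly yields the $\alpha^{(k)}_{s}$-terms with $D^{\mathit{il}(s)}\sigma_s^2$ and $\bL^{\mathit{il}(s)}$, and the mean-correction double sum with $\bM^{\mathit{il}(s,m)}$ entering with a minus sign, using condition 4 (and $\alpha_s^{(k)}=\sum_q\alpha_{s,q}^{(k)}$ since $\sum_q p_j^q=1$).

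The genuine gap is in the final CLT step. The Lyapunov bound $\M\|\bu_{j;n}\|^{2+\delta}\le K$ is not available under assumption 1: $\bu_{j;n}$ is \emph{quadratic} in the regressors (not of order ``(regressor)$^3$'', but that is incidental), so $\|\bu_{j;n}\|^{2+\delta}$ involves regressor moments of order $4+2\delta$, whereas only fourth moments are assumed; the paper even stresses (Experiment 4) that the theorem is meant to apply exactly at the fourth-moment boundary. You must instead verify the Lindeberg condition you wrote down directly: since $|a_{j;n}^k|$ is uniformly bounded (the minors $\gamma_{mk;n}$ and the $p_j^m$ are bounded while $\det\bGamma_n>C$), one has $\|\bu_{j;n}\|^2\le C\,W_j$ with $W_j=\|\bX_j\|^2\varepsilon_j^2+\|\bX_j\|^4$, and each $(\bX_j,\varepsilon_j)$ is a mixture of the same $M$ component laws, for each of which $\M[W\,|\,\kappa=m]<\infty$ by assumption 1 and conditional independence; hence $\M[W_j\mathbf 1\{W_j>t\}]\le\max_m\M[W\mathbf 1\{W>t\}\,|\,\kappa=m]\to0$ uniformly in $j,n$, which gives Lindeberg with no extra moments. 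With that replacement (and noting that consistency of $\frac1n\bX^T\bA\bX$ follows from the weighted law of large numbers, e.g. theorem 4.2 cited from \cite{Maiboroda:StatisticsDNA}, rather than from the statement of Theorem \ref{ThConsist} itself), your proposal becomes a complete proof along the intended lines.
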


\section{Parametric approach}\label{SectParam}

In this section we discuss the parametric approach to the estimation
of $\bb^k$ based on papers \cite{Faria,GrunLeisch}. We will assume that the
representation (\ref{EqParamM}) holds with some unknown $\vartheta_m\in
\varTheta\subseteq\R^L$, $m=1,\dots,M$.

Then the set of all unknown parameters $\tau=(\bb^k,\beta,\vartheta)$
consists of
\[
\beta=\bigl(\bb^m,m=1,\dots,M,m\neq k\bigr)
\]
and
\[
\vartheta=(\vartheta_1,\dots,\vartheta_M).
\]
Here $\bb^k$ is our parameter of interest, $\beta$ and $\vartheta$ are
the nuisance parameters.

In this model the log-likelihood for the unknown $\tau$ by the sample
$\varXi_n$ can be defined as
\[
L(\tau)=\sum_{j=1}^n L(
\xi_j,\bp_j,\tau),
\]
where
$\bp_j=(p_j^1,\dots,p_j^M)^T$,
\[
L(\xi_j,\bp_j,\tau)=\ln \Biggl(\sum
_{m=1}^M p_j^m
f_{X,m}(\bX_j;\vartheta_m)f_{\varepsilon,m}
\bigl(\bY_j-\bX_j^T\bb^m\bigr)
\Biggr).
\]
The general maximum likelihood estimator (MLE)
$\hat\tau^{\mathit{MLE}}_n\,{=}\,(\hat\bb^{k,{\mathit{MLE}}},\hat\beta^{\mathit{MLE}},\hat\vartheta^{\mathit{MLE}})$
for $\tau$ is defined as
\[
\hat\tau^{\mathit{MLE}}_n=\argmax_\tau L(\tau),
\]
where the maximum is taken over all possible values of $\tau$. Unfortunately,
this estimator is not applicable to most common parametric mixture
models, since
the log-likelihood $L(\tau)$ usually is not bounded on the set of all
possible $\tau$.

For example, it is so in the normal mixture model (\ref{EqNormalMix}).
Really, in this model
$L(\tau)\to\infty$ as $\sigma_1^2\to0$ and $Y_1-\bX_1^T\bb^1=0$ with
all other parameters being
arbitrary fixed.

The usual way to cope with this problem is to use the one-step MLE, which
can be considered as one iteration of the Newton--Raphson algorithm of
approximate
calculation of MLE, starting from some pilot estimate (see \cite{Shao},
section 4.5.3). Namely, let
$
\hat\tau_n^{(0)}
$
be some pilot estimate for $\tau$.
Let us consider $\tau$ as a vector of dimension
$P=d\times M+M\times L$ and denote its entries by $\tau_i$:
\[
\tau=(\tau_1,\dots,\tau_P).
\]
Denote the gradient of $L(\tau)$ by
\[
s_n(\tau)={\partial L(\tau)\over\partial\tau}= \biggl( {\partial L(\tau)\over\partial\tau_1},
\dots,{\partial L(\tau)\over
\partial\tau_P} \biggr)^T
\]
and the Hessian of $L(\tau)$ by
\[
\gamma_n(\tau)= \biggl( {\partial L(\tau)\over\partial\tau_i\tau_l}
\biggr)_{i,l=1}^P.
\]
Then the one-step estimator for $\tau$ starting from $\hat{\tau^{(0)}}$
is defined as
\[
\hat\tau_n^{\mathit{OS}}=\hat\tau^{(0)}- \bigl(
\gamma_n\bigl(\hat\tau^{(0)}\bigr) \bigr)^{-1}s_n
\bigl(\hat\tau^{(0)}\bigr).
\]
Theorem 4.19 in \cite{Shao} provides general conditions under which
$\hat\tau_n^{\mathit{OS}}$ constructed by an i.i.d. sample is
consistent, asymptotically normal and asymptotically efficient.\footnote
{Note that in our setting the sample $\varXi_n$ is not
an i.i.d. sample. But one can consider it as i.i.d. if the vectors of
concentrations $(p_j^1,\dots,p_j^M)$
are generated by some stochastic mechanism as i.i.d. vectors. See \cite
{MS2015-2} for an example of such \textit{stochastic concentrations}
models.} The limit distribution of the normalized one-step estimate is
the same as of the consistent version of MLE.

So, if the assumptions of theorem 4.19 (or other analogous statement)
hold, there is no need to use an iterative procedure to derive an
estimate with asymptotically optimal performance. But on samples of
moderate size $\hat\tau^{\mathit{OS}}_n$ can be not good enough.

Another popular way to obtain a stable estimate for $\tau$ is to use
some version of EM-algorithm. A general EM-algorithm is an iterative
procedure for approximate calculation of maximum likelihood estimates
when information on some variables is missed. We describe here only the
algorithm which calculates EM estimates $\hat\tau^{\mathit{EM}}_n$ under the
normal mixture model assumptions (\ref{EqNormalMix}), cf. \cite
{Faria,GrunLeisch}.

The algorithm starts from some pilot estimate
\[
\hat\tau^{(0)}=\bigl(\hat\bb^{(0)}_m,\hat
\sigma^{2(0)}_m,\hat\mu _m^{(0)},\hat
\varSigma_m^{(0)},m=1,\dots,M\bigr)
\]
for the full set of the model parameters.\vadjust{\goodbreak}

Then for $i=1,2,\dots$ the estimates are iteratively recalculated in
the following way.

Assume that on the $i$-iteration estimates
$\hat\bb^{(i)}$, $\hat\sigma^{2(i)}_m$, $\hat\mu_m^{(i)}$, $\hat
\varSigma_m^{(i)}$, $m=1,\dots,M$ are obtained. Then
the $i$-th stage weights are defined as
\begin{equation}
\label{EqWdef} w_j^{m(i)}=w_j^m
\bigl(\xi_j,\hat\tau^{(i)}\bigr) = {
p_j^mf_{\bX,m}(\bX_j;\hat\mu_m^{(i)},\hat\varSigma_m^{(i)})
f_{\varepsilon,m}(Y_j-\bX_j^T\hat\bb^{m(i)};\hat\sigma_m^{2(i)})
\over
\sum_{l=1}^M
p_j^lf_{\bX,l}(\bX_j;\hat\mu_l^{(i)},\hat\varSigma_l^{(i)})
f_{\varepsilon,l}(Y_j-\bX_j^T\hat\bb^{l(i)};\hat\sigma_l^{2(i)})
}
\end{equation}
(note that $w_j^{m(i)}$ is the posterior probability
$\pr\{\kappa_j=m\ |\xi_j \}$ calculated for $\tau=\hat\tau^{(i)}$).

Let $\bar w^m=\sum_{j=1}^n w_j^{m(i)}$.
Then the estimators of the $i+1$ iteration are defined as
\begin{align*}
\hat\mu_m^{(i+1)}&={1\over\bar w^m}\sum
_{j=1}^n w_j^{m(i)}
\bX_j,\\
\hat\varSigma_m^{(i+1)}&={1\over\bar w^m}\sum
_{j=1}^n w_j^{m(i)} \bigl(
\bX_j-\hat\mu_m^{(i)}\bigr) \bigl(
\bX_j-\hat\mu_m^{(i)}\bigr)^T,\\
\hat\bb^{m(i+1)}&= \Biggl( \sum_{j=1}^n
w_j^{m(i)}\bX_j\bX_j^T
\Biggr)^{-1} \sum_{j=1}^n
w_j^{m(i)}Y_j\bX_j,\\
\hat\sigma_m^{2(i+1)}&={1\over\bar w^m}\sum
_{j=1}^n w_j^{m(i)}
\bigl(Y_j-\bX_j^T\hat b^{m(i)}
\bigr)^2.
\end{align*}
The iterations are stopped when some stopping condition is fulfilled.
For example, it can be
\[
\bigl\|\hat\tau^{(i+1)}-\hat\tau^{(i)}\bigr\|<\delta,
\]
where $\delta$ is a prescribed target accuracy.

It is known that this procedure provide stable estimates which (for
sample large enough) converge
to the point of local minimum of $L(\tau)$ which is the closest to the
pilot estimator $\hat\tau^{(0)}$.

So, this estimator can be considered as an approximate version of a
root of likelihood equation estimator (RLE).

The asymptotic behavior of $\hat\tau_n^{\mathit{OS}}$ and $\hat\tau_n^{\mathit{EM}}$
can be described in terms of Fisher's information matrix
$\bI^*(n,\tau)=(I^*_{\mathit{il}}(n,\tau))_{i,l=1}^P$, where
\[
I^*_{\mathit{il}}(n,\tau)=\sum_{j=1}^nI_{\mathit{il}}(
\bp_j,\tau), \ I_{\mathit{il}}(\bp,\tau)=\M{\partial L(\xi_\bp,\bp,\tau)\over\partial\tau
_i}
{\partial L(\xi_\bp,\bp,\tau)\over\partial\tau_l},
\]
where $\bp=(p^1,\dots,p^m)$, $\xi_\bp$ is a random vector with the pdf
\begin{equation}
\label{Eqfbp} f_\bp(y,\bx;\tau)=\sum_{m=1}^M
p^m f(\bx;\vartheta_m)f_{\varepsilon
}\bigl(y-
\bx^T\bb^m;\vartheta_m\bigr).
\end{equation}

Under the regularity \querymark{Q1}assumptions (RR) of theorem 70.5 in
\cite{BorovkovMs},
\begin{equation}
\label{EqMLEAsNorm} \bigl(\bI^*(n,\tau) \bigr)^{1/2}\bigl(\hat
\tau^{\mathit{MLE}}_n-\tau\bigr)\weak N(0,\EE),
\end{equation}
where $\EE$ is the $R\times R$ unit matrix.

Assumptions (RR) include the assumption of likelihood boundedness, so
they do not hold for the normal mixture model. But if the
pilot estimate $\hat\tau_n^{(0)}$ is $\sqrt{n}$-consistent, one needs
only a local version of (RR) to derive asymptotic
normality of $\hat\tau_n^{\mathit{OS}}$ and $\hat\tau_n^{\mathit{EM}}$, i.e. (RR) must
hold in some neighborhood of the true value of
estimated $\tau$. These local (RR) hold for the normal mixture model if
$\sigma_m^2>0$ and $\varSigma_m$ are nonsingular for all $m=1,\dots,M$.

To use all these results for construction of an estimator for $\bb^k$
we need $\sqrt{n}$-consistent pilot estimators for the parameter of
interest and nuisance parameters. They can be derived by the
nonparametric technique considered in Section \ref{SectNonp}. To
construct confidence ellipsoids we will also need estimators for the
dispersion matrix $\bV$ from (\ref{EqDefV}) in the nonparametric case
and estimators for the information matrix $\bI^*(n,\tau)$ in the
parametric case.
These estimators are discussed in the next section.

\section{Estimators for nuisance parameters and normalizing
matrices}\label{SecEstN}

Let us start with the estimation of the dispersion matrix $\bV$ in
Theorem \ref{ThCLT}.
In fact, we need to estimate consistently the matrices $\bD$ and
$\bSigma$.

Note that
$\bD=\bD^{(k)}=(D^{\mathit{is}(k)})_{i,s=1}^d$, where
\[
D^{\mathit{is}(k)}=\M\bigl[X^i(O)X^s(O)\ | \ \kappa(O)=k
\bigr].
\]
By theorem 4.2 in \cite{Maiboroda:StatisticsDNA},
\begin{equation}
\label{EqEstD} \hat D_n^{\mathit{is}(k)}={1\over n}\sum
_{j=1}^n a_j^k
X_j^iX_j^s
\end{equation}
is a consistent estimate for $D^{\mathit{is}(k)}$ if
$\M[\|\bX(O)\|^2\ |\ \kappa(O)=m]<\infty$
for all $m=1,\dots,M$ and assumption 3 of Theorem \ref{ThConsist} holds.

So one can use $\hat\bD_n^{(k)}=(\hat D_n^{\mathit{is}(k)})_{i,s=1}^d$ as a consistent
estimate for $\bD$ if the assumptions of Theorem \ref{ThConsist} hold.

Similarly, $\bL^{\mathit{is}(m)}$ can be estimated consistently by
\begin{equation}
\label{EqEstL} \hat\bL^{\mathit{is}(m)}_n={1\over n}\sum
_{j=1}^n a_j^m
X_j^iX_j^s\bX_j
\bX_j^T
\end{equation}
under the assumptions of Theorem \ref{ThCLT}.

The same idea can be used to estimate $\sigma^2_m$ by
\begin{equation}
\label{EqEstSigma} \hat\sigma^{2(0)}_{m;n}={1\over n}
\sum_{j=1}^n a_j^m
\bigl(Y_j-\bX_j^T\hat \bb^{\mathit{LS}}(s,n)
\bigr)^2.
\end{equation}
The coefficients $\alpha_{s,q}^{(k)}$ can be approximated by
\[
\hat\alpha^{(k)}_{s,q}={1\over n}\sum
_{j=1}^n \bigl(a_{j;n}^k
\bigr)^2p_j^sp_j^q.
\]
Now replacing true $\bD^{(m)}$, $\bL^{\mathit{is}(m)}$, $\bb^{m}$, $\sigma^2_m$
and $\alpha^{(k)}_{s,q}$ in
formula (\ref{varequal2}) by their estimators
$\hat\bD^{(m)}_n$, $\hat\bL^{\mathit{is}(m)}_n$, $\hat\bb^{\mathit{LS}}(m,n)$, $\hat
\sigma^2_{m,n}$ and $\hat\alpha^{(k)}_{s,q}$,
one obtains a consistent estimator $\hat\varSigma_n$ for $\varSigma$.

Then
\begin{equation}
\label{EqEstV} \hat\bV_n=\hat\bD_n^{-1}\hat
\varSigma_n\hat\bD_n^{-1}
\end{equation}
is a consistent estimator for $\bV$.

To get the pilot estimators for the normal mixture model one can use
the same approach. Namely, we define
\[
\hat\mu_{m,n}^{(0)}={1\over n}\sum
_{j=1}^n a_j^m
\bX_j,\qquad \hat\varSigma_{m,n}^{(0)}=
{1\over n}\sum_{j=1}^n
a_j^m(\bX_j-\hat\mu _{m,n}) (
\bX_j-\hat\mu_{m,n})^T
\]
as estimates for $\mu_m$ and $\varSigma_m$.

By theorem 4.3 from \cite{Maiboroda:StatisticsDNA}, $\hat\mu_{m,n}^{(0)}$,
$\hat\varSigma_{m,n}^{(0)}$ and $\hat\sigma^{2(0)}_{m,n}$ are $\sqrt
{n}$-consistent estimators for the corresponding parameters
of the normal mixture model. This allows one to use them as pilot
estimators for the one-step and EM estimators.

Now let us consider estimation of the Fisher information matrix in the
case of normal mixture model. Define
\begin{align}
\label{EqII} \hat I_{\mathit{il}}(n,\tau)&=\sum_{j=1}^n
{\partial L(\xi_j,\bp_j,\tau)\over\partial\tau_i}{\partial L(\xi
_j,\bp_j,\tau)\over\partial\tau_l},\\
\label{EqEmpInf} \hat\bI(n,\tau)&=\bigl(\hat I_{\mathit{il}}(n,\tau)
\bigr)_{i,l=1}^R,\qquad \hat\bI(n)=\hat\bI(n,\hat\tau)
\end{align}
where $\hat\tau$ can be any consistent estimator for $\tau$ (e.g. $\hat
\tau^{\mathit{OS}}_n$ or $\hat\tau^{\mathit{EM}}_n$).

In the normal mixture model we will denote
$\tau_{(l)}=(\bb^{(l)},\mu_l,\varSigma_l,\sigma_l^2)$, i.e. the set of
all unknown parameters which describe the $l$-th mixture component.

\begin{thm}\label{ThInfConsist}
Assume that the normal mixture model \querymark{Q2}is taken 
and
\begin{enumerate}
\item[1.] $\sigma^2_m>0$, $\varSigma_m$ are nonsingular for all $m=1,\dots,M$;
\item[2.] There exist $c>0$ such that for all $j=1,\dots,n$, $m=1,\dots,M$,
$n=1,2,\dots$
\[
p_j^m>c.
\]
\item[3.] $\tau^{(l)}\neq\tau^{(m)}$ for all $l\neq m$, $l,m=1,\dots,M$.
\end{enumerate}

Then
\begin{enumerate}
\item[1.] There exist $0<c_0<C_1<\infty$ such that
\[
c_on\le\bigl\|\bI^*(n,\tau)\bigr\|\le C_1 n
\]
for all $n=1,2,\dots$.
\item[2.] ${1\over n}\|\bI^*(n,\tau)-\hat\bI(n)\|\to0$
in probability as $n\to\infty$.\vadjust{\goodbreak}
\end{enumerate}
\end{thm}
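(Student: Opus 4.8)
\emph{Overview and structural reduction.} The plan is to reduce the statement to the familiar geometry of a finite Gaussian mixture on $\R^{d+1}$ and then combine a compactness argument (Part 1) with a weak law of large numbers for triangular arrays (Part 2). First I would record the key structural remark: in the normal mixture model each component's joint density $g_m(y,\bx)\bydef f_{\bX,m}(\bx)f_{\varepsilon,m}(y-\bx^T\bb^m)$ is the density of a $(d+1)$-variate normal law $N(\nu_m,\Lambda_m)$, where
\[
\nu_m=\bigl((\bb^m)^T\mu_m,\ \mu_m^T\bigr)^T,\qquad \Lambda_m=\begin{pmatrix}\sigma_m^2+(\bb^m)^T\varSigma_m\bb^m & (\bb^m)^T\varSigma_m\\ \varSigma_m\bb^m & \varSigma_m\end{pmatrix}.
\]
By assumption 1, $\Lambda_m$ is nonsingular, and $\tau_{(m)}=(\bb^m,\mu_m,\varSigma_m,\sigma_m^2)\mapsto(\nu_m,\Lambda_m)$ is a smooth injection with explicit inverse; hence assumption 3 forces the pairs $(\nu_m,\Lambda_m)$ to be pairwise distinct and $f_\bp(\,\cdot\,;\tau)=\sum_{m=1}^M p^m g_m$ is an ordinary finite Gaussian mixture on $\R^{d+1}$. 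I would also note that the per-observation score $\psi_j(\tau)\bydef\partial L(\xi_j,\bp_j,\tau)/\partial\tau$ splits into component blocks, $\partial L/\partial\tau_{(m)}=w_j^m\,\nabla_{\tau_{(m)}}\ln g_m(\xi_j)$, with posterior weights $w_j^m\in[0,1]$ (cf. (\ref{EqWdef})); consequently $\|\psi_j(\tau)\|$ is bounded by a polynomial $\Psi(\xi_j)$ in $\|\xi_j\|$ not depending on $\bp_j$, and the same kind of bound (with a larger polynomial, uniformly on a small closed neighbourhood $U$ of the true $\tau$, chosen inside the region where every $\varSigma_m$ stays nonsingular and $\sigma_m^2>0$) holds for all $\tau$-derivatives of $\psi_j(\tau)\psi_j(\tau)^T$ of order $\le 1$. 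Since the true law of $\xi_j$ is a convex combination of the $M$ fixed Gaussian laws $g_1,\dots,g_M$, all its moments — hence $\M[\Psi(\xi_j)^r]$ and $\M[K_j^r]$ for the Lipschitz constants $K_j$ below — are bounded uniformly in $j$.

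\emph{Part 1.} Dominated convergence (dominating function $\Psi(\xi)^2\sum_m g_m(\xi)$, integrable and independent of $\bp$) shows that $\bp\mapsto I(\bp,\tau)=\M[\nabla_\tau L(\xi_\bp,\bp,\tau)\nabla_\tau L(\xi_\bp,\bp,\tau)^T]$ is continuous on the simplex, so $C_1\bydef\sup_\bp\|I(\bp,\tau)\|<\infty$ and $\|\bI^*(n,\tau)\|\le\sum_{j=1}^n\|I(\bp_j,\tau)\|\le C_1 n$. For the lower bound the crucial claim is that $I(\bp,\tau)$ is positive definite whenever all $p^m>0$: otherwise some unit vector $v$ makes $v^T\nabla_\tau L(\xi_\bp,\bp,\tau)$ vanish $f_\bp(\,\cdot\,;\tau)$-a.s., hence (all densities being continuous and strictly positive on $\R^{d+1}$) $v^T\nabla_\tau f_\bp\equiv0$; splitting $v$ into blocks $v_{(m)}$ and using $p^m>0$ yields a nontrivial linear relation among $\{\partial g_m/\partial(\tau_{(m)})_i\}_{m,i}$, contradicting the strong identifiability of the Gaussian family — linear independence of Gaussian densities with distinct parameters together with their first parameter-derivatives, seen e.g. by taking Fourier transforms, where the relation becomes one among polynomials in $t$ times $\exp(\mathrm{i}\,t^T\nu_m-\tfrac12 t^T\Lambda_m t)$, and distinct $(\nu_m,\Lambda_m)$ force every polynomial factor, hence every $v_{(m)}$, to vanish. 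Given the claim, $\bp\mapsto\lambda_{\min}(I(\bp,\tau))$ is continuous and strictly positive on the compact set $\PP_c\bydef\{\bp:\ p^m\ge c,\ \textstyle\sum_m p^m=1\}$ — assumption 2 is indispensable here, since $I(\bp,\tau)$ degenerates as any $p^m\to0$ — so $c_0\bydef\min_{\bp\in\PP_c}\lambda_{\min}(I(\bp,\tau))>0$; as $\bp_j\in\PP_c$ for every $j$, all eigenvalues of $\bI^*(n,\tau)=\sum_{j=1}^n I(\bp_j,\tau)$ are $\ge c_0 n$, in particular $\|\bI^*(n,\tau)\|\ge c_0 n$.

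\emph{Part 2.} I would write $\tfrac1n(\bI^*(n,\tau)-\hat\bI(n))=A_n+B_n$ with $A_n=\tfrac1n\sum_{j=1}^n(I(\bp_j,\tau)-\psi_j(\tau)\psi_j(\tau)^T)$ and $B_n=\tfrac1n\sum_{j=1}^n(\psi_j(\tau)\psi_j(\tau)^T-\psi_j(\hat\tau)\psi_j(\hat\tau)^T)$, using $I(\bp_j,\tau)=\M[\psi_j(\tau)\psi_j(\tau)^T]$ since $\xi_j\sim f_{\bp_j}(\,\cdot\,;\tau)$. The summands of $nA_n$ are independent, mean zero, with uniformly bounded second moments (because $\sup_j\M[\|\psi_j(\tau)\|^4]<\infty$), so $\M[\|A_n\|^2]=O(1/n)$ and $A_n\inprob0$ by Chebyshev. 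For $B_n$, on $U$ the map $\tau'\mapsto\psi_j(\tau')\psi_j(\tau')^T$ is Lipschitz (mean-value inequality) with a random constant $K_j=K_j(\xi_j)$ bounded by a polynomial in $\|\xi_j\|$, so $\sup_j\M[K_j]<\infty$ and $\tfrac1n\sum_{j=1}^n K_j=O_P(1)$; since $\hat\tau\inprob\tau$, on the event $\{\hat\tau\in U\}$ (probability $\to1$) we get $\|B_n\|\le\bigl(\tfrac1n\sum_{j=1}^n K_j\bigr)\|\hat\tau-\tau\|\inprob0$. Combining, $\tfrac1n\|\bI^*(n,\tau)-\hat\bI(n)\|\inprob0$.

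\emph{Main difficulty.} The only non-routine point is the strict positive definiteness of $I(\bp,\tau)$ in Part 1 — equivalently, the strong identifiability of the normal regression mixture — which is where assumptions 1 and 3 are used; assumption 2 then serves only to pass from pointwise positive definiteness to a uniform lower bound over $\{\bp_j\}$. The remaining ingredients (continuity plus compactness, the $L^2$-law of large numbers for $A_n$, and the mean-value estimate for $B_n$) are routine once the uniform-in-$j$ Gaussian moment bounds from the structural reduction are set up.
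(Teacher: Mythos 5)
Your proposal is correct and follows essentially the same route as the paper: Part 1 rests on showing $\bu^T\bI(\bp,\tau)\bu>0$ by ruling out an a.s.\ vanishing linear combination of polynomial-times-Gaussian terms (assumptions 1 and 3) and then using continuity and compactness over $\{\|\bu\|=1\}$ and the simplex slice forced by assumption 2, while Part 2 combines a law of large numbers at the true $\tau$ with consistency of $\hat\tau$ and polynomial moment bounds on the scores. The only divergences are cosmetic: you close the positivity step via Fourier-transform linear independence of the distinct Gaussian components (plus injectivity of the reparametrization), where the paper instead reduces to nonsingularity of the per-component Fisher information $\bI_{\tau_{(m)}}$, and in Part 2 you control the plug-in of $\hat\tau$ by a direct mean-value/Lipschitz bound on your $B_n$ term instead of the paper's uniform convergence over a neighborhood in the spirit of Shao's Lemma 5.3 --- both substitutions are sound and of comparable depth.
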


\begin{note} Here and below for any square matix $\bI$ the symbol $\|
\bI\|$ means the operator norm of $\bI$, i.e.
\[
\|\bI\|=\sup_{\bu:\ \|\bu\|=1}\bigl|\bu^T\bI\bu\bigr|.
\]
\end{note}

\begin{proof} 1. At first we will show that
\begin{equation}
\label{EqInon0} \bu^T\bI(\bp,\tau)\bu>0
\end{equation}
for any $\tau$ and any $\bu\in\R^P$ with $\|\bu\|=1$
and for any $\bp=(p^1,\dots,p^M)$ with $p^m>c$ for all $m=1,\dots,M$.

Recall that $\tau=(\tau_{(1)},\dots,\tau_{(M)})$, where $\tau_{(m)}$
corresponds to the parameters describing the $m$-th component. Let us
divide $\bu$ into analogous blocks $\bu=(\bu_{(1)}^T,\dots,\break\bu_{(M)}^T)^T$.

Then
\begin{align*}
\bu^T\bI(\bp,\tau)\bu &=\M{\bu^T}{\partial\over\partial\tau}L(
\xi_\bp,\bp,\tau) \biggl({\partial\over\partial\tau}L(
\xi_\bp,\bp,\tau) \biggr)^T \bu\\
&=\M \biggl( {\bu^T}{\partial\over\partial\tau}L(\xi_\bp,
\bp,\tau) \biggr)^2\\
&=\M \Biggl(\sum_{m=1}^M {
\bu_{(m)}^T}{\partial\over\partial\tau_{(m)}}L(\xi_\bp,
\bp,\tau) \Biggr)^2.
\end{align*}
Note that
$
{\partial\over\partial\tau_{(m)}}L(\xi_\bp,\bp,\tau)
$
can be represented as
\begin{equation}
\label{EqLA} {\partial\over\partial\tau_{(m)}}L(\xi_\bp,\bp,\tau) =\bA(
\tau_{(m)},\xi_\bp) {p^m\varphi_{\tau_{(m)}}(\xi_\bp)\over f_\bp(\xi_\bp;\tau)}
\end{equation}
where $\varphi_{\tau_{(m)}}$ is the normal pdf of the observation $\xi$
from the $m$-th component, $f_\bp$ is the pdf of the mixture defined by
(\ref{Eqfbp}),
$\bA(\tau_{(m)},\xi_\bp)$ is a vector with entries which are polynomial
functions from the entries of $\xi_\bp$.

Then
\begin{equation}
\label{Eq5*} \bu^T\bI(\bp,\tau)\bu= \M \Biggl( \sum
_{m=1}^M p^m \bu_{(m)}^T
\bA(\tau_{(m)},\xi_\bp)\varphi_{\tau_{(m)}}(
\xi_\bp) \Biggr)^2 {1\over f_\bp(\xi_\bp;\tau)^2}.
\end{equation}
Note that by the assumptions 1 and 2 of the theorem $f_\bp(\xi;\tau)>0$
for all $\xi\in\R^{d+1}$.
Then $\bu_{(m)}^T\bA(\tau_{(m)},\xi_\bp)$ are polynomials of $\xi_\bp$
and $\varphi_{\tau_{(m)}}(\xi_\bp)$ are \querymark{Q3}exponentials of
different (due to assumption 3) and nonsingular (due to assumption 1)
quadratic forms of $\xi_\bp$.

Suppose that\querymark{Q4} $\bu^T\bI(\bp,\tau)\bu$ for some $\bu$ with
$\|\bu\|=1$. Then (\ref{Eq5*}) implies
\begin{equation}
\label{Eq2*} \bu_{(m)}^T\bA(\tau_{(m)},
\xi_\bp)=0 \text{ a.s.}
\end{equation}
for all $m=1,\dots,M$.

On the other hand, (\ref{Eq2*}) implies
\[
\M \bigl(\bu_{(m)}^T\bA(\tau_{(m)},
\xi_\bp)\varphi_{\tau_{(m)}}(\xi_\bp )
\bigr)^2=\bu_{(m)}^T\bI_{\tau_{(m)}}
\bu_{(m)}=0,
\]
where $\bI_{\tau_{(m)}}$ is the Fisher information matrix for the
unknown $\tau_{(m)}$ by one observation from the $m$-th component. By
the assumption 1, $\bI_{\tau_{(m)}}$ is nonsingular, so $\bu_{(m)}=0$
for all $m=1,\dots,M$. This contradicts the assumption $\|\bu\|=1$.

So, by contradiction, (\ref{EqInon0}) holds.
Since $\bu^T\bI(\bp,\tau)\bu$ is a continuous function on the compact
set of $\bu:\ \|\bu\|=1$ and $\bp$ satisfying assumption 2, from (\ref
{EqInon0}) we obtain
$\bu^T\bI(\bp,\tau)\bu>c_0$ for some $c_0>0$.
On the other hand, the representation (\ref{EqLA}) implies $\|\bI(\bp
,\tau)\|<C_1$

Then from
$\bI^*(n,\tau)=\sum_{j=1}^n\bI(\bp_j,\tau)$
we obtain the first statement of the theorem.

2. To prove the second statement note that
by the law of large numbers
\begin{align*}
\Delta_n(\tau)&={1\over n}\bigl(\hat\bI(n,\tau)-
\bI^*(n,\tau)\bigr)\\
&={1\over n} \sum_{j=1}^n
\biggl[ {\partial L(\xi_j,\bp_j,\tau)\over\partial\tau} \biggl({\partial L(\xi_j,\bp_j,\tau)\over\partial\tau}
\biggr)^T \\
&\quad- \M {\partial L(\xi_j,\bp_j,\tau)\over\partial\tau} \biggl({\partial L(\xi_j,\bp_j,\tau)\over\partial\tau}
\biggr)^T \biggr]\\
&\inprob0,\quad \text{ as } n\to\infty,
\end{align*}
since
\[
\M \biggl\llVert {\partial L(\xi_j,\bp_j,\tau)\over\partial\tau} \biggr\rrVert ^4\le C<
\infty
\]
for all $j$.

Let $B\subseteq\R^P$ be any open bounded neighborhood of $\tau$. Note that
\[
\M\sup_{\tau\in B} \biggl\llVert {\partial\over\partial\tau}
{\partial L(\xi_j,\bp_j,\tau)\over\partial\tau} \biggl({\partial L(\xi_j,\bp_j,\tau)\over\partial\tau} \biggr)^T
\biggr\rrVert <C_2<\infty.
\]
From this together with $\Delta_n(\tau)\inprob0$ we obtain
\[
\sup_{\tau\in B}\bigl\|\Delta_n(\tau)\bigr\|\inprob0
\]
(applying the same technique as in lemma 5.3 from \cite{Shao}).

The last equation together with $\hat\tau_n\inprob\tau$ implies the
second statement of the Theorem.
\end{proof}

\section{Confidence ellipsoids for $\bb^k$}\label{SecCE}

Let $\varXi_n$ be any random dataset of size $n$ with distribution
dependent of an unknown parameter $\bb\in\R^d$. Recall
that a set $B_\alpha=B_\alpha(\varXi_n)\subset\R^d$ is called an
asymptotic confidence set of the significance level
$\alpha$ if
\[
\lim_{n\to\infty}\pr\bigl\{\bb\notin B_\alpha(
\varXi_n) \bigr\}=\alpha.
\]

We will construct confidence sets for the vector of regression
coefficients $\bb=\bb^k$ by the sample from a mixture $\varXi_n$ described
in Section \ref{SecModel}. In the nonparametric case the set will be
defined by statistics of the form
\[
S^{\mathit{LS}}(\beta)=n\bigl(\beta-\hat\bb^{\mathit{LS}}(k,n)
\bigr)^T\hat\bV_n^{-1}\bigl(\beta-\hat\bb
^{\mathit{LS}}(k,n)\bigr).
\]
In the parametric case we take the matrix $\hat\bI(n)$ defined by (\ref
{EqEmpInf}) and consider its inverse matrix $\hat\bI
(n)^{-1}=(I^{-}(i,m))_{i,m=1}^R$.

Note that by (\ref{EqII}) and (\ref{EqEmpInf}) the elements $\hat
I_{im}$ of $\hat\bI(n)$ correspond to coordinates $\tau_i$ and $\tau
_m$ of the vector of unknown parameters $\tau$. Let us take the set of
indices $l_m$, $m=1,\dots,d$ such that
$\tau_{l_m}=b_m^k$ and consider the matrix
\[
\bigl[\hat\bI(n)^{-1}\bigr]_{(k)}=\bigl(I^{-}(l_i,l_m)
\bigr)_{i,m=1}^d.
\]
So, the matrix $[\hat\bI(n)^{-1}]_{(k)}$ contains the elements of
$\hat\bI(n)^{-1}$ corresponding to $\bb^{(k)}$ only.

Then we invert this matrix once more:
\[
\hat\bI_k(n)^{+}= \bigl(\bigl[\hat\bI(n)^{-1}
\bigr]_{(k)} \bigr)^{-1}.
\]
This matrix is used to construct the statistics which defines the
confidence set:
\[
S^{\mathit{OS}}(\beta)=\bigl(\beta-\hat\bb^{\mathit{OS}}(k,n)
\bigr)^T \hat\bI_k(n)^{+}\bigl(\beta-\hat\bb
^{\mathit{OS}}(k,n)\bigr)
\]
or
\[
S^{\mathit{EM}}(\beta)=\bigl(\beta-\hat\bb^{\mathit{EM}}(k,n)
\bigr)^T \hat\bI_k(n)^{+}\bigl(\beta-\hat\bb
^{\mathit{EM}}(k,n)\bigr).
\]
Here $\hat\bb^{\mathit{OS}}(k,n)$ and $\hat\bb^{\mathit{EM}}(k,n)$ are the parts of the
estimators $\hat\tau^{\mathit{OS}}_n$ and
$\hat\tau^{\mathit{EM}}_n$ which esitmate $\bb^k$.

In what follows the symbol $\star$ means any of symbols $\mathit{LS}$, $\mathit{OS}$ or
$\mathit{EM}$. The confidence set $B^\star_\alpha(\varXi_n)$ is
defined by
\begin{equation}
\label{EqDefEll} B^\star_\alpha(\varXi_n)=\bigl\{
\beta\in\R^d:\ S^\star(\beta)\le Q^{\chi
_d^2}(1-\alpha)
\bigr\},
\end{equation}
where $Q^{\chi_d^2}(1-\alpha)$ is the $(1-\alpha)$-quantile of $\chi^2$
distribution with $d$ degrees
of freedom.

In the parametric case $\hat\bI_k(n)^{+}$ is a positively defined
matrice, so $B^\star_\alpha(\varXi_n)$
defined by (\ref{EqDefEll}) is the interior of an ellipsoid centered
at $\hat\bb^\star(k,n)$.

In the nonparametric case the matrix $\hat\bV_n$ can be not
positively defined for small $n$, so the set $B^{\mathit{LS}}_\alpha(\varXi_n)$ can
be unbounded. We will discuss some remedial actions for this problem in
Section \ref{SecSimul}.

\begin{thm}\label{ThConfNP}
Under the assumptions of Theorem \ref{ThCLT},
\[
\lim_{n\to\infty}\pr\bigl\{\bb^k\notin
B^{\mathit{LS}}_\alpha(\varXi_n) \bigr\}=\alpha.
\]
\end{thm}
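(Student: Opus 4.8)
The plan is to show that the statistic $S^{\mathit{LS}}(\bb^k)$ converges in distribution to $\chi^2_d$, which immediately gives the result since $\Pr\{\bb^k\notin B^{\mathit{LS}}_\alpha(\varXi_n)\} = \Pr\{S^{\mathit{LS}}(\bb^k) > Q^{\chi^2_d}(1-\alpha)\} \to \Pr\{\chi^2_d > Q^{\chi^2_d}(1-\alpha)\} = \alpha$ by the portmanteau theorem (the limiting distribution is continuous, so the boundary of the rejection region has zero probability). So everything reduces to analyzing $S^{\mathit{LS}}(\bb^k) = n(\bb^k - \hat\bb^{\mathit{LS}}(k,n))^T\hat\bV_n^{-1}(\bb^k - \hat\bb^{\mathit{LS}}(k,n))$.

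First I would invoke Theorem~\ref{ThCLT}: under its assumptions, $\sqrt{n}(\hat\bb^{\mathit{LS}}(k,n) - \bb^k) \weak N(0,\bV)$ with $\bV = \bD^{-1}\bSigma\bD^{-1}$ nonsingular (since $\bD$ is nonsingular by assumption~2 and $\bSigma$ is a genuine covariance matrix, hence positive definite — this point may need a brief remark, as $\varSigma^{\mathit{il}}$ is the asymptotic covariance of the score-type sum and its nondegeneracy follows from assumption~2). Next I would use the discussion in Section~\ref{SecEstN}: formulas (\ref{EqEstD})–(\ref{EqEstV}) produce $\hat\bV_n$, and under the assumptions of Theorem~\ref{ThCLT} each building block ($\hat\bD_n^{(k)}$, $\hat\bL_n^{\mathit{is}(m)}$, $\hat\bb^{\mathit{LS}}(m,n)$, $\hat\sigma^2_{m,n}$, $\hat\alpha^{(k)}_{s,q}$) is consistent, so $\hat\bV_n \inprob \bV$ by the continuous mapping theorem (sums, products, and the matrix inversion $\bD \mapsto \bD^{-1}$ are continuous at nonsingular $\bD$). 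Since $\bV$ is invertible, a second application of the continuous mapping theorem gives $\hat\bV_n^{-1} \inprob \bV^{-1}$.

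Then I would combine these by Slutsky's theorem. Write $\bZ_n = \sqrt{n}(\bb^k - \hat\bb^{\mathit{LS}}(k,n)) \weak \bZ \sim N(0,\bV)$. Then
\[
S^{\mathit{LS}}(\bb^k) = \bZ_n^T \hat\bV_n^{-1} \bZ_n = \bZ_n^T \bV^{-1} \bZ_n + \bZ_n^T(\hat\bV_n^{-1} - \bV^{-1})\bZ_n.
\]
The second term is $o_P(1)$: $\bZ_n = O_P(1)$ by tightness and $\hat\bV_n^{-1} - \bV^{-1} \inprob 0$, so the bilinear form vanishes in probability. The first term converges in distribution to $\bZ^T\bV^{-1}\bZ$ by the continuous mapping theorem applied to the fixed continuous quadratic form $\bx \mapsto \bx^T\bV^{-1}\bx$. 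Finally, $\bZ^T\bV^{-1}\bZ \sim \chi^2_d$ is the standard fact that a quadratic form $\bZ^T\bSigma^{-1}\bZ$ in a centered Gaussian vector with covariance $\bSigma$ has the chi-square distribution with degrees of freedom equal to the dimension (diagonalize via $\bSigma^{-1/2}\bZ \sim N(0,\bI_d)$). Hence $S^{\mathit{LS}}(\bb^k) \weak \chi^2_d$, and the conclusion follows.

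The main obstacle is the consistency claim $\hat\bV_n \inprob \bV$: it rests on the consistency of all the plug-in estimators of Section~\ref{SecEstN}, which in turn relies on results quoted from \cite{Maiboroda:StatisticsDNA} (Theorems~4.2, 4.3) and on the consistency of $\hat\bb^{\mathit{LS}}(m,n)$ for every component $m$, not just $m=k$ — so one must check that Theorem~\ref{ThConsist} (or Theorem~\ref{ThCLT}) indeed applies to each $m$ under the stated hypotheses. If the paper only assumes $\bD^{(k)}$ nonsingular, the consistency of $\hat\bb^{\mathit{LS}}(m,n)$ for $m\neq k$ is not guaranteed, but one still only needs $\hat\bSigma_n$ and $\hat\bD_n$ to converge — and the terms $(\bb^s-\bb^k)$ in (\ref{varequal2}) are estimated by $\hat\bb^{\mathit{LS}}(s,n) - \hat\bb^{\mathit{LS}}(k,n)$, so this gap (if it exists) would need to be addressed, perhaps by an additional nonsingularity assumption on all $\bD^{(m)}$ or by noting it is implied by the fourth-moment condition. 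Everything else is routine application of Slutsky and the continuous mapping theorem.
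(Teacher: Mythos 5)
Your proposal is correct and takes essentially the same route as the paper: the paper's own proof is just the one-line observation that Theorem \ref{ThCLT} together with consistency of $\hat\bV_n$ gives $S^{\mathit{LS}}(\bb^k)\weak\chi_d^2$, and your argument simply fills in the Slutsky/continuous-mapping details behind that claim. Your closing caveat about consistency of the plug-in estimators for components $m\neq k$ is a fair observation about Section \ref{SecEstN}, but the paper takes that consistency as given there and does not revisit it in the proof of this theorem.
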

\begin{proof} Theorem \ref{ThCLT} and consistency of $\hat\bV_n$
imply that $S^{\mathit{LS}}(\bb^k)\weak\chi_d^2$, so
\[
\pr\bigl\{\bb^k\notin B^{\mathit{LS}}_\alpha(
\varXi_n) \bigr\}=\pr\bigl\{S^{\mathit{LS}}\bigl(\bb^k
\bigr)>Q^{\chi
_d^2}(1-\alpha) \bigr\}\to\alpha
\]
as $n\to\infty$.
\end{proof}

\begin{thm}\label{ThConfParam}
Under the assumptions of Theorem \ref{ThInfConsist},
\[
\lim_{n\to\infty}\pr\bigl\{\bb^k\notin
B^{\mathit{OS}}_\alpha(\varXi_n) \bigr\}=\alpha.
\]
\end{thm}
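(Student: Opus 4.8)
The plan is to reduce the assertion to the single weak-convergence statement $S^{\mathit{OS}}(\bb^k)\weak\chi_d^2$; granted that, the conclusion follows exactly as in the proof of Theorem~\ref{ThConfNP}, since $\pr\{\bb^k\notin B^{\mathit{OS}}_\alpha(\varXi_n)\}=\pr\{S^{\mathit{OS}}(\bb^k)>Q^{\chi_d^2}(1-\alpha)\}$ then tends to $\alpha$. That weak convergence I would obtain by combining, through Slutsky's lemma, two facts: (i) asymptotic normality of $\hat\bb^{\mathit{OS}}(k,n)$ normalized by the square root of the inverse of the $\bb^k$-block of $\bI^*(n,\tau)^{-1}$, and (ii) consistency of $\hat\bI_k(n)^{+}$ for that same inverse block.

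For (i), I would first note that under the hypotheses of Theorem~\ref{ThInfConsist} we are in the setting described before~(\ref{EqMLEAsNorm}): assumption~1 supplies the \emph{local} regularity conditions (RR) for the normal mixture model, while assumption~2 keeps $\bGamma_n$ bounded away from a singular matrix, so the nonparametric estimators of Section~\ref{SecEstN} are $\sqrt n$-consistent and may serve as pilots. Hence (\ref{EqMLEAsNorm}) holds with $\hat\tau^{\mathit{OS}}_n$ in place of $\hat\tau^{\mathit{MLE}}_n$, i.e.\ $\zeta_n\bydef(\bI^*(n,\tau))^{1/2}(\hat\tau^{\mathit{OS}}_n-\tau)\weak N(0,\EE)$. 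Let $l_1,\dots,l_d$ be the (fixed) coordinates of $\tau$ equal to the entries of $\bb^k$, let $\mathbf R$ be the $d\times P$ matrix selecting them, so $\hat\bb^{\mathit{OS}}(k,n)-\bb^k=\mathbf R(\hat\tau^{\mathit{OS}}_n-\tau)$, and set $\mathbf W_n\bydef\mathbf R(\bI^*(n,\tau))^{-1}\mathbf R^T=[(\bI^*(n,\tau))^{-1}]_{(k)}$. Then $\mathbf W_n^{-1/2}(\hat\bb^{\mathit{OS}}(k,n)-\bb^k)=\mathbf B_n\zeta_n$ with $\mathbf B_n\bydef\mathbf W_n^{-1/2}\mathbf R(\bI^*(n,\tau))^{-1/2}$ satisfying $\mathbf B_n\mathbf B_n^T=\bI_d$; since the matrices $\mathbf B_n$ range over a compact set, a subsequence argument (deterministic convergence of $\mathbf B_n$ along subsequences plus (\ref{EqMLEAsNorm})) gives $\mathbf B_n\zeta_n\weak N(0,\bI_d)$, whence $(\hat\bb^{\mathit{OS}}(k,n)-\bb^k)^T\mathbf W_n^{-1}(\hat\bb^{\mathit{OS}}(k,n)-\bb^k)\weak\chi_d^2$.

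For (ii), the essential input is that the argument proving Theorem~\ref{ThInfConsist} in fact yields $\bu^T(n^{-1}\bI^*(n,\tau))\bu\in[c_0,C_1]$ for \emph{every} unit $\bu$, so $n^{-1}\bI^*(n,\tau)$ has all eigenvalues in a fixed interval bounded away from $0$ and $\infty$; by part~2 of that theorem the same is true, with probability tending to $1$, of $n^{-1}\hat\bI(n)$. Matrix inversion being Lipschitz on such a set, $n\hat\bI(n)^{-1}-n(\bI^*(n,\tau))^{-1}\inprob0$, and restricting to rows and columns $l_1,\dots,l_d$ gives $n[\hat\bI(n)^{-1}]_{(k)}-n\mathbf W_n\inprob0$. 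As the eigenvalues of the principal submatrix $n\mathbf W_n$ interlace those of $n(\bI^*(n,\tau))^{-1}$, they too stay in a fixed interval away from $0$ and $\infty$; inverting once more and tracking the orders $\|\mathbf W_n\|=O(n^{-1})$, $\|\mathbf W_n^{-1}\|=O(n)$ and $\|\hat\bI_k(n)^{+}\|=O_{\mathsf P}(n)$, one arrives at $\mathbf W_n^{1/2}\hat\bI_k(n)^{+}\mathbf W_n^{1/2}\inprob\bI_d$. Writing $\eta_n\bydef\mathbf W_n^{-1/2}(\hat\bb^{\mathit{OS}}(k,n)-\bb^k)$ so that $S^{\mathit{OS}}(\bb^k)=\eta_n^T(\mathbf W_n^{1/2}\hat\bI_k(n)^{+}\mathbf W_n^{1/2})\eta_n$, Slutsky's lemma combined with (i) then delivers $S^{\mathit{OS}}(\bb^k)\weak\chi_d^2$.

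The part I expect to be the real work is the bookkeeping in step~(ii): carrying the convergence $n^{-1}(\hat\bI(n)-\bI^*(n,\tau))\inprob0$ through \emph{two} successive matrix inversions and the passage to the $\bb^k$-block, while keeping every matrix involved uniformly (in $n$) bounded and bounded away from singular after the appropriate rescaling by $n$ — which is exactly what part~1 of Theorem~\ref{ThInfConsist} is designed to provide. A secondary point, already essentially settled by the discussion in Section~\ref{SectParam}, is to confirm that (\ref{EqMLEAsNorm}) legitimately transfers to $\hat\tau^{\mathit{OS}}_n$ under precisely assumptions~1--2, in particular that a $\sqrt n$-consistent pilot estimator is available there.
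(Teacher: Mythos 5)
Your proposal follows essentially the same route as the paper's sketch: establish the asymptotic normality (\ref{EqSaNormOS}) of $\hat\tau^{\mathit{OS}}_n$ normalized by $\bI^*(n,\tau)^{1/2}$, deduce the $\chi^2_d$ limit for the quadratic form built from the theoretical matrix $\bI_k(n)^{+}=([\bI^*(n,\tau)^{-1}]_{(k)})^{-1}$ (your $\mathbf W_n^{-1}$), and then use both statements of Theorem~\ref{ThInfConsist} to replace it by $\hat\bI_k(n)^{+}$ via a Slutsky-type argument — the matrix-inversion bookkeeping you spell out is exactly what the paper's sketch leaves implicit. The only inaccuracy is the side claim that assumption~2 of Theorem~\ref{ThInfConsist} keeps $\bGamma_n$ bounded away from singularity (equal concentrations $p_j^m\equiv 1/M$ satisfy it yet make $\bGamma_n$ rank one), so the $\sqrt n$-consistency of the LS pilot must be imported from condition~3 of Theorem~\ref{ThCLT} rather than deduced there; the paper's sketch likewise simply presupposes a $\sqrt n$-consistent pilot.
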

\begin{proof}[Sketch proof] By theorem 70.5 from \cite{BorovkovMs} one
obtains the asymptotic normality of the local MLE estimate
\[
\hat\tau^{l\mathit{MLE}}_n=\argmax_{\tau\in D}L(\tau),
\]
where $D$ is a sufficiently small neighborhood of the true $\tau$. Then
the convergence
\begin{equation}
\label{EqSaNormOS} \bI^*(n,\tau)^{-1/2}\bigl(\hat\tau^{\mathit{OS}}_n-
\tau\bigr)\weak N(0,\EE)
\end{equation}
can be obtained from the asymptotic normality of $\hat\tau^{l\mathit{MLE}}_n$ by
the technique of theorem 14.19 from \cite{Shao}.

Let us denote
\[
S^{\mathit{OS}}_0(\beta)=\bigl(\beta-\hat\bb^{\mathit{OS}}(k,n)
\bigr)^T \bI_k(n)^{+}\bigl(\beta-\hat\bb
^{\mathit{OS}}(k,n)\bigr),
\]
where
$\bI_k(n)^{+}$ is the theoretical counterpart of $\hat\bI_k(n)^{+}$:
\[
\bI_k(n)^{+}= \bigl(\bigl[ \bI^*(n,\tau)^{-1}
\bigr]_{(k)} \bigr)^{-1}.
\]
Then by (\ref{EqSaNormOS}), $S^{\mathit{OS}}_0(\bb^{k})\weak\chi^2_d$.

Note that (\ref{EqSaNormOS}) and the first statement of Theorem \ref
{ThInfConsist} imply
\[
\zeta_n=\hat\bb^{\mathit{OS}}(k,n)-\bb^{k}=O_p
\bigl(n^{-1/2}\bigr).
\]
The second statement of Theorem \ref{ThInfConsist} implies
\[
{1\over n}\bigl\|\hat\bI_k(n)^{+}-
\bI_k(n)^{+}\bigr\|\inprob 0.
\]
So
\[
S^{\mathit{OS}}_0\bigl(\bb^{k}\bigr)-S^{\mathit{OS}}
\bigl(\bb^{k}\bigr) =\zeta_n^T
{1\over n}\bigl(\hat\bI_k(n)^{+}-
\bI_k(n)^{+}\bigr)\zeta_n\inprob 0
\]
and $S^{\mathit{OS}}(\bb^{k})\weak\chi^2_d$.

This completes the proof.
\end{proof}

\section{Results of simulations}\label{SecSimul}

We carried out a small simulation study to assess performance of the
parametric and nonperametric confidence intervals described above. A
two component mixture $(M=2)$
of simple regressions was simulated. The regression models were of the form
\begin{equation}
\label{EqSimpleRegr} Y=b_0^{\kappa}+b_1^{\kappa}X+
\varepsilon^{\kappa},
\end{equation}
where $X^k\sim N(\mu_k,\varSigma^2_k)$ and $Y$ are the observed regressor
and response, $\kappa$ is the unobserved number of components,
$\varepsilon^k$ is the regression error. The error $\varepsilon^k$ has
zero mean and variance $\sigma_k^2$.

The mixing probabilities were simulated by the following stochastic model:
\[
p_{j;N}^m = {u_{j}^m \over\varSigma_{s=1}^M{u_{j}^s}},
\]
where $u_{j}^m$ are independent uniformly distributed on $[0,1]$.

For each sample size $n$ we generated $1000$ samples. Parametric (EM)
and nonparametric (LS) confidence ellipsoids were constructed by each
sample. The parametric ellipsoids were based on EM-estimates which used
the LS-estimates as the pilot ones and $\hat\bI_k(n)^{+}$ as the matrix
for the quadratic form in $S^{\mathit{EM}}$.

The nonparametric confidence ellipsoids were based on the LS-estimates.
As it was mentioned in Section \ref{SecCE}, the matrix $\hat\bV_n$ can
be not positively defined. Then the corresponding confidence set will
be unbounded. In the case of simple regression (\ref{EqSimpleRegr})
this drawback can be cured by the use of improved weights $b_j^{+}$
defined in \cite{Kub1} instead of $a_j^k$ in (\ref{EqEstD})--(\ref
{EqEstSigma}). This technique was used in our simulation study.

All the ellipsoids were constructed with the nominal confidence level
$\alpha=0.05$. The frequencies of covering true $\bb^{k}$ by the
constructed ellipsoids and their mean volume were calculated in each
simulation experiment.

\begin{experiment}
The values of parameters for this experiment are presented in Table~\ref{TabPar1}.
The errors $\varepsilon^k$ were Gaussian. This is a
``totally separated'' model in which the observations can be visually
divided into two groups corresponding to different mixture components
(see the left panel at Fig. \ref{FigScat12}).
\begin{figure}[t]
\includegraphics{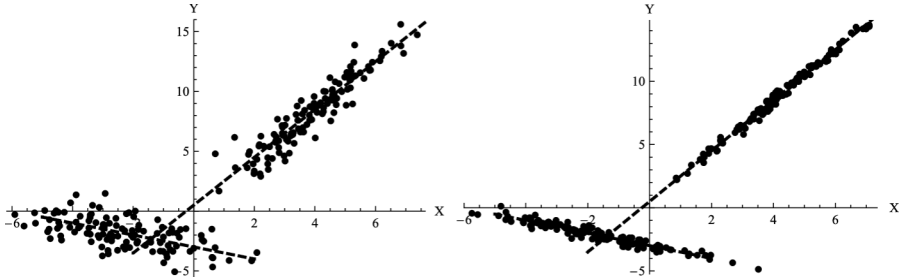}
\caption{Typical scatterplots of data in Experiment 1 (left) and
Experiment 2 (right)}\label{FigScat12}
\end{figure}

\begin{table}[t!]
\caption{Parameters for simulation in Experiments 1 and 2}\label{TabPar1}
%
\begin{tabular}{ | l | c | r |}
\hline
$k$ & 1 & 2 \\ \hline
$\mu_k$ & $-$2 & 4 \\ \hline
$\varSigma_k$ & 3 & 2 \\ \hline
$\sigma_k$ & 1 & 1 \\ \hline
$b_0^k$ & $-$3 & 0.5 \\ \hline
$b_1^k$ & $-$0.5 & 2 \\
\hline
\end{tabular}
%
\end{table}

\begin{table}[t!]
\caption{Experiment 1 results ($k$ is the number of component)}\label{TabEx1}
\begin{tabular}{|l|c|c|c|c|}
\multicolumn{5}{c}{Covering frequencies}\\
\hline
\tmultirow{2}{*}{$n$} &\multicolumn{2}{c|}{LS} & \multicolumn{2}{c|}{EM}\\
\cline{2-5}
& $k=1$ & $k=2$ & $k=1$ & $k=2$\\
\hline
100 &0.954 &0.821 & 0.946 & 0.951\\
$10^3$& 0.975 & 0.914 & 0.947 & 0.952\\
$10^4$&0.988 & 0.951 & 0.95 & 0.95\\
$10^5$& 0.951 & 0.963 & 0.952 & 0.953\\
$10^6$ & 0.936 & 0.949 & 0.936 & 0.951\\
\hline
\end{tabular}
\begin{tabular}{|l|c|c|c|c|}
\multicolumn{5}{c}{Average volume of ellipsoids}\\
\hline
\tmultirow{2}{*}{$n$} &\multicolumn{2}{c|}{LS} & \multicolumn{2}{c|}{EM}\\
\cline{2-5}
& $k=1$ & $k=2$ & $k=1$ & $k=2$\\
\hline
100 &$298* 10^6$ &$262* 10^6$ & 2.177543 & 0.243553\\
$10^3$& 1364 & 394 & 0.186303 & 0.021234\\
$10^4$&0.476327 & 0.317320 & 0.018314 & 0.002062\\
$10^5$&0.041646 & 0.030047 & 0.001845 & 0.000207\\
$10^6$ & 0.004121 & 0.002988 & 0.000185 & 0.000021\\
\hline
\end{tabular}
\vspace*{6pt}
\end{table}

Covering frequencies and mean volumes of the ellipsoids for different
sample sizes $n$ are presented in Table~\ref{TabEx1}.
They demonstrate sufficient accordance with the nominal significance
level for sample sizes greater then 1000. Extremely large mean volumes
for the LS-ellipsoids are due to poor performance of the estimates $\hat
\bV_n$ for small and moderate sample sizes $n$.

The parametric confidence sets are significantly smaller then the
nonparametric ones.
\end{experiment}

\begin{experiment}
To see how the standard deviations of regression errors affect the
performance of our algorithms we reduced them to $\sigma_k=0.25$ in the
second experiment, keeping all other parameters unchanged. A typical
scatterplot of such data is presented on the right panel of Fig. \ref
{FigScat12}.

The results of this experiment are presented in Table~\ref{TabEx2}.
They are compared graphically to the results of Experiment 1 in Fig.
\ref{FigVol12}. The covering frequencies are not significantly changed.
In comparison to Experiment 1, the average volumes decreased
significantly for EM-ellipsoids but not for the LS ones.
\end{experiment}

\begin{figure}[t]
\includegraphics{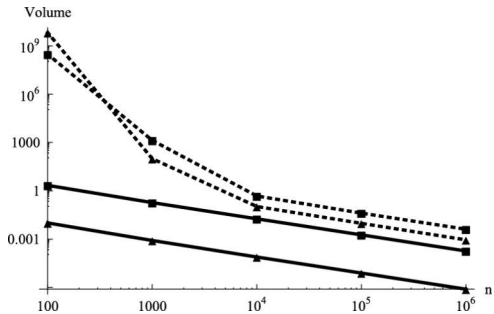}
\caption{Average volumes of ellipsoids in Experiment 1 ($\blacksquare
$) and Experiment 2 ($\blacktriangle$).
Solid lines for EM, dashed lines for LS (First component)}\label{FigVol12}
\vspace*{-6pt}
\end{figure}

\begin{table}[t!]
\caption{Experiment 2 results ($k$ is the number of component)}\label{TabEx2}
\begin{tabular}{|l|c|c|c|c|}
\multicolumn{5}{c}{Covering frequencies}\\
\hline
\tmultirow{2}{*}{$n$} &\multicolumn{2}{c|}{LS} & \multicolumn{2}{c|}{EM}\\
\cline{2-5}
& $k=1$ & $k=2$ & $k=1$ & $k=2$\\
\hline
100 & 0.886 & 0.922& 0.950& 0.943\\
$10^3$&0.942& 0.910& 0.945& 0.948\\
$10^4$&0.954& 0.946& 0.951& 0.955\\
$10^5$&0.962& 0.958& 0.943& 0.950\\
$10^6$&0.955& 0.937& 0.961& 0.942\\
\hline
\end{tabular}
\begin{tabular}{|l|c|c|c|c|}
\multicolumn{5}{c}{Average volume of ellipsoids}\\
\hline
\tmultirow{2}{*}{$n$} &\multicolumn{2}{c|}{LS} & \multicolumn{2}{c|}{EM}\\
\cline{2-5}
& $k=1$ & $k=2$ & $k=1$ & $k=2$\\
\hline
100    & 6560085466   & 43879747920  & 0.01022148               & 0.01199164               \\
$10^3$ & 98.92863     & 182799.39295 & 0.0008286214             & 0.0011644647             \\
$10^4$ & 0.1061491    & 0.2465760    & $7.846928\times10^{-05}$ & $1.196875\times10^{-04}$ \\
$10^5$ & 0.009584066  & 0.021603033  & $7.870776\times10^{-06}$ & $1.191609\times10^{-05}$ \\
$10^6$ & 0.0009045894 & 0.0021206426 & $7.875141\times10^{-07}$ & $1.189581\times10^{-06}$ \\
\hline
\end{tabular}
\end{table}
%

\begin{experiment}
Here we consider another set of parameters (see
Table~\ref{TabPar2}). The regression errors are Gaussian. In this model
the subjects cannot be classified uniquely by their observed variables
(see the left panel in Fig. \ref{FigScat34}).
\begin{figure}[t]
\includegraphics{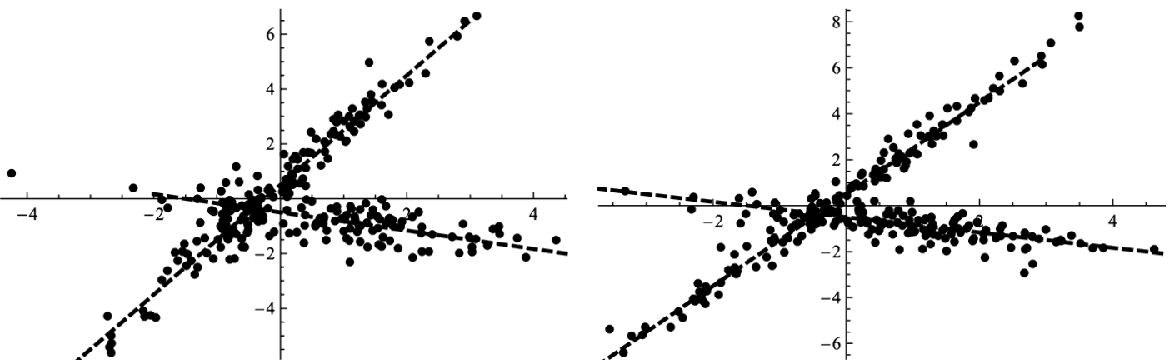}
\caption{Typical scatterplots of data in Experiment 3 (left) and
Experiment 4 (right)}\label{FigScat34}
\end{figure}

\begin{table}[t]
\caption{Parameters for simulation in Experiments 3 and 4}\label{TabPar2}
\begin{tabular}{ | l | c | r |}
\hline
$k$ & 1 & 2 \\
\hline
$\mu_k$ & 0 & 1 \\
\hline
$\varSigma_k$ & 2 & 2 \\
\hline
$\sigma_k$ & 0.5 & 0.5 \\
\hline
$b_0^k$ & 0.5 & $-$0.5 \\
\hline
$b_1^k$ & 2 & $-\frac{1}{3}$ \\
\hline
\end{tabular}
\end{table}

\begin{table}[t]
\caption{Experiment 3 results ($k$ is the number of component)}\label{TabEx3}
\begin{tabular}{|l|c|c|c|c|}
\multicolumn{5}{c}{Covering frequencies}\\
\hline
\tmultirow{2}{*}{$n$} &\multicolumn{2}{c|}{LS} & \multicolumn{2}{c|}{EM}\\
\cline{2-5}
& $k=1$ & $k=2$ & $k=1$ & $k=2$\\
\hline
100& 0.920& 0.928& 0.949& 0.935\\
$10^3$&0.953& 0.943& 0.948& 0.946\\
$10^4$&0.951& 0.957& 0.954& 0.945\\
$10^5$&0.947& 0.963& 0.942& 0.961\\
$10^6$&0.945& 0.951& 0.948& 0.939\\
\hline
\end{tabular}
\begin{tabular}{|l|c|c|c|c|}
\multicolumn{5}{c}{Average volume of ellipsoids}\\
\hline
\tmultirow{2}{*}{$n$} &\multicolumn{2}{c|}{LS} & \multicolumn{2}{c|}{EM}\\
\cline{2-5}
& $k=1$ & $k=2$ & $k=1$ & $k=2$\\
\hline
100& 294.5016& 28837.3340& 0.05897494& 0.06088698\\
$10^3$&0.6088472& 0.6274452& 0.005250821& 0.004937218\\
$10^4$&0.05837274& 0.05594969& 0.0005024635& 0.0004993278\\
$10^5$&0.005604424& 0.00551257& 4.987135$\times10^{-05}$&
5.024126$\times10^{-05}$\\
$10^6$&0.0005625693&0.0005550716 &4.978973$\times10^{-06}$&
5.029275$\times10^{-06}$\\
\hline
\end{tabular}
\vspace*{6pt}
\end{table}
The results are presented in Table~\ref{TabEx3}. Again, the
EM-ellipsoids outperform the LS ones.
\end{experiment}

\begin{experiment}
In this experiment the parameters are the same as in Experiment 3, but
the regression errors are \textbf{not} Gaussian. We let
$\varepsilon^k=\sqrt{3/5}\sigma_k\eta$, where $\eta$ has the Student-T
distribution with 5 degrees of freedom. So the errors here have the
same variances as in Experiment 3, but their distributions are
heavy-tailed. Note that 5 is the minimal number of degrees of freedom
for which the assumption $\M(\varepsilon_k)^4$ of Theorem~\ref{ThCLT} holds.

A typical data scatterplot for this model is presented on the right
panel of Fig. \ref{FigScat34}. It is visually indistinguishable from
the typical pattern of the Gaussian model from Experiment 3, presented
on the left panel.

Results of this experiment are presented in Table~\ref{TabEx4}. Note
that in this case the covering proportion of the EM-ellipsoids does not
tend to the nominal $1-\alpha=0.95$ for large $n$. The covering
proportion of LS-ellipsoids is much nearer to 0.95. So the heavy tails
of distributions of the regression errors deteriorate performance of
(Gaussian model based) EM-ellipsoids but not of nonparametric LS-ellipsoids.\vadjust{\goodbreak}
\end{experiment}

\section{An application to sociological data analysis}\label{SecApplication}
To demonstrate possibilities of the developed technique, we present a
toy example of construction of confidence ellipsoids in statistical
analysis of dependence between school
performance of students and political attitudes of their adult
environment. The analysis was based on two data sets.
The first one contains results of the External independent testing in
Ukraine in 2016 -- EIT-2016. EIT is a a set of exams for high schools
graduates for admission to universities. Data on EIT-2016\footnote
{Taken from the official site of \textit{Ukrainian Center for
Educational Quality Assessment} \surl{https://zno.\\testportal.com.ua/stat/2016}.} contain individual scores
of examinees with some additional
information including the region of Ukraine at which the examinee's
school was located.
The scores range from 100 to 200 points.

\begin{table}
\caption{Experiment 4 results ($k$ is the number of component)}\label{TabEx4}
\begin{tabular}{|l|c|c|c|c|}
\multicolumn{5}{c}{Covering frequencies}\\
\hline
\tmultirow{2}{*}{$n$} &\multicolumn{2}{c|}{LS} & \multicolumn{2}{c|}{EM}\\
\cline{2-5}
& $k=1$ & $k=2$ & $k=1$ & $k=2$\\
\hline
100& 0.912& 0.915& 0.943& 0.937\\
$10^{3}$& 0.948& 0.945& 0.949& 0.959\\
$10^{4}$& 0.937& 0.945& 0.929& 0.953\\
$10^{5}$& 0.947& 0.951& 0.915& 0.930\\
$10^{6}$& 0.961& 0.953& 0.634& 0.763\\
\hline
\end{tabular}
\begin{tabular}{|l|c|c|c|c|}
\multicolumn{5}{c}{Average volume of ellipsoids}\\
\hline
\tmultirow{2}{*}{$n$} &\multicolumn{2}{c|}{LS} & \multicolumn{2}{c|}{EM}\\
\cline{2-5}
& $k=1$ & $k=2$ & $k=1$ & $k=2$\\
\hline
100      & 997.1288     & 584.8507     & 0.06740671                     & 0.06419959                     \\
$10^{3}$ & 0.7006367    & 0.6127510    & 0.005262779                    & 0.004971307                    \\
$10^{4}$ & 0.05798962   & 0.05624429   & 0.0004850319                   & 0.0004884329                   \\
$10^{5}$ & 0.005621060  & 0.005574176  & $4.667732\times10^{-05}$ & $4.746252\times10^{-05}$ \\
$10^{6}$ & 0.0005616700 & 0.0005566846 & $4.666926\times10^{-06}$ & $4.745760\times10^{-06}$ \\
\hline
\end{tabular}
\end{table}

We considered the information on the scores
on two subjects: \textit{Ukrainian language and literature} (Ukr) and
on \textit{Mathematics} (Math).
EIT-2016 contains data on these scores for nearly 246 000 examinees.
It is obvious that Ukr and Math scores should be dependent and the
simplest way to model this dependency is the linear regression:
\[
\text{Ukr}=b_0+b_1\text{Mat}+\varepsilon.
\]
We suppose that the coefficients $b_0$ and $b_1$ may depend on the
political attitudes of the adult environment in which the student was
brought up. Say, in a family of Ukrainian independence adherents one
expects more interest to Ukrainian language than in an environment
critical toward the Ukrainian state existence.\vadjust{\eject}

Of course EIT-2016 does not contain any information on political
issues. So we used the second data set with the official data on the
results\footnote{See the site of \textit{Central Election Commission
(Ukraine)} \url{http://www.cvk.gov.ua/vnd\_2014/}.} of the
Ukrainian Parliament elections-2014 to get approximate proportions of
adherents of different political choices in different regions of Ukraine.

29 political parties and blocks took part in the elections. The voters
were able also to vote against all or not to take part in the voting.
We divided all the population of voters into three components:

(1) Persons who voted for parties which then created the ruling
coalition (BPP, People's front, Fatherland, Radical party, Self
Reliance). This is the component of persons with positive attitudes to
the pro-European Ukrainian state.

(2) Persons who voted for the Opposition block, voters against all,
and voters for small parties which where under 5\% threshold at these
elections. These are voters critical to the pro-European line of
Ukraine but taking part in the political life of the state.

(3) Persons who did not take part in the voting. These are persons who
did not consider Ukrainian state as their own one or are not interested
in politics at all.

We used the results of elections to calculate the proportions of each
component in each region of Ukraine where the voting was held. These
proportions were taken as estimates for the probabilities that a
student from a corresponding region was brought up in the environment
of a corresponding component. That is, they were considered as the
mixing probabilities.

The LS- and EM-ellipsoids for $b_0$ and $b_1$ obtained by these data
are presented on Fig. \ref{FigEIT}. The ellipsoids were constructed
with the significance level $\alpha=0.05/3\approx0.01667$, so by the
Bonferroni rule, they are unilateral confidence sets with $\alpha
=0.05$. Since the ellipsoids are not intersecting in both cases, one
concludes that the vectors of regression coefficients $(b_0^i,b_1^i)$,
$i=1,\dots,3$ are significantly different for different components.

Note that the EM approach leads to estimates significantly different
from the LS ones. This may suggest that the normal mixture model
(\ref{EqNormalMix}) does not hold for the data. Does the nonparametric
model hold for them? Analysis of this problem and meaningful
sociological interpretation of these results lie beyond the scope of
this article.

\begin{figure}[t]
\includegraphics{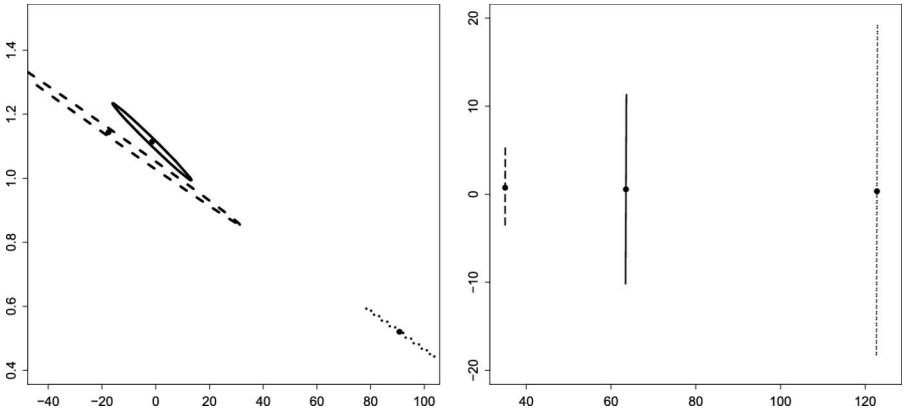}
\caption{LS (left) and EM (right) confidence ellipsoids for the EIT data.
Components: (1) dotted line, (2) dashed line, (3) solid line.
$b_0$ on the horizontal axis, $b_1$ on the vertical axis}\label{FigEIT}
\end{figure}

\section{Concluding remarks}\label{SecConcl}

We considered two approaches to the construction of confidence sets for
coefficients of the linear regression in the mixture model with varying
mixing probabilities. Both approaches demonstrate sufficient agreement
of nominal and real significance levels for sufficiently large samples
when the data satisfy underlying assumptions of the confidence set
construction technique. The parametric approach needs a
significant additional a priori information in comparison with the
nonparametric one. But it utilizes this information providing much
smaller confidence sets than in the nonparametric case.

On the other hand, the nonparametric estimators proved to be a good
initial approximation for the construction of parametric estimators via
the EM-algorithm. Nonparametric confidence sets also perform adequately
in the cases when the assumptions of parametric model are broken.


\begin{acknowledgement}[title={Acknowledgments}]
We are thankful to the unknown referees for their attention to our work
and fruitful comments.

The research was supported in part by the Taras Shevchenko National
University of Kyiv scientific grant
N 16\textcyr{B}\textcyr{F}038-02.
\end{acknowledgement}



\end{document}